\newcommand{\ain}{a_{\textrm{in}}}
\newcommand{\aout}{a_{\textrm{out}}}
\newcommand{\pin}{p_{\textrm{in}}}
\newcommand{\pout}{p_{\textrm{out}}}
\newcommand{\win}{w_{\textrm{in}}}
\newcommand{\wout}{w_{\textrm{out}}}
\newcommand{\notP}{\overline{P}}
\newcommand{\tb}{\tilde{b}}
\newcommand{\bfx}{\mathbf{x}}
\DeclareMathOperator*{\argmin}{arg\,min}
\newtheorem{theorem}{Theorem}
\newtheorem{lemma}{Lemma}%
\newtheorem{proposition}{Proposition}%
\newtheorem{corollary}{Corollary}
\theoremstyle{remark}%
\newtheorem{remark}{Remark}%
\theoremstyle{definition}%
\newtheorem{definition}{Definition}%
\newtheorem{observation}{Observation}%
\begin{document}
\title{On the Impact of Social Media Recommendations on Opinion Consensus}


\author{Vincenzo Auletta \and Antonio Coppola \and Diodato Ferraioli\\
Universit\`a degli Studi di Salerno\\
Email: {\tt \{auletta,ancoppola,dferraioli\}@unisa.it}}
\date{}

%
%

\maketitle


\abstract{
We consider a discrete opinion formation problem in a setting where agents are influenced by both information diffused by their social relations and from recommendations received directly from the social media manager. We study how the ``strength'' of the influence of the social media and the homophily ratio affect the probability of the agents of reaching a consensus and how these factors can determine the type of consensus reached. 

In a simple 2-symmetric block model we prove that agents converge either to a consensus or to a persistent disagreement. In particular, we show that when the homophily ratio is large, the social media has a very low capacity of determining the outcome of the opinion dynamics. On the other hand, when the homophily ratio is low, the social media influence can have an important role on the dynamics, either by making harder to reach a consensus or inducing it on extreme opinions. 

Finally, in order to extend our analysis to more general and realistic settings we give some experimental evidences that our results still hold on general networks.

}

\section{Introduction}
Over the last years, we witnessed a rapid rise of the role of online social networking platforms, such as Facebook or Twitter, in our life. As a consequence, individuals increasingly rely on these social platforms to get news and form their opinions. E.g., according to Pew Research Center survey in 2018 \citep{shearer} 68\% of American adults get news on social media, a significant rise from 49\% of 2012. Moreover, it has been observed that social media may have a relevant effect in many real-world critical settings, such as in electoral campaigns \citep{ioana,fujiwara}. For example, some studies showed that the social media may lead to extremism \citep{benigni} and polarization in individuals' opinions \citep{allcott}.

Hence, it urges to understand how the social media may affect the process of opinion formation of their users. To this aim, several models have been introduced to describe how the opinions of agents evolve under the effect of the social influence. The first such model, due to \cite{DeGroot}, states that each agent adopts an opinion that averages among the ones of individuals which she interacts with. One of the most relevant extensions of this model is, undoubtedly, the dynamics described by \cite{FJ} (see also, the work of \cite{bindel}), that limits the effects of social influence by holding agents close to their original ideology. These models assume that opinions may take values in a continuous space, and agents may express any value in this space. However, in several real settings, i.e., electoral contexts, the number of alternatives around which opinions should converge are limited. Moreover, even if opinions can take values that do not match any alternative, these cannot be expressed due to the limitedness of the options according to which opinions are expressed (e.g., polls, finite-precision ranks, etc.). 
For these reasons, continuous models turn out to be scarcely representative in some settings, and discrete versions of these models have been proposed in which agents' opinions must belong to a discrete set \cite{chierichetti,goldberg}.

However, in several settings it is not sufficient to take into account only the social influence among agents', but we have also to understand how the social media may influence the opinion formation process, and whether and how it is necessary to mitigate in some way the effects it provokes.

There has been recently an increasing interest on these questions. In particular, most of the recent literature in the social choice area focuses on the opportunity for the social media to manipulate the opinion formation process in order to support a target opinion. Different forms of manipulations have been studied, such as seeding, edge addition/deletion, and alteration of the order of changes (see Related Works section for more details).

In this work, we deviate from this approach, and we do not consider the social media as a manipulator. That is, the social media does not have a target that should be promoted, but it only acts as a platform for sharing information. However, social media's goal is to maximize the activity of the agents on the platform and it implements policies about which, when, and to whom information are shared, in order to maximize engagement of users to their service.
While the actual implementation of these policies is private, 
it is evident that users are more likely to be exposed to information closer to their own opinion \citep{bakshy,levy}. \cite{halberstam} have proved that agents have larger probability of interacting (by viewing, liking, or re-sharing) with this kind of information, witnessing in this way their major engagement with the social media.

In this paper we want to answer the following question: how much a social media implementing these policies can influence the opinion formation process? This problem has been recently addressed by \cite{immorlica} in the context of continuous opinion formation processes. Their answer depends on the strength of the influence of the social media platform on individuals: if this is high, then agents' opinions tend to extremes; if low, agents' opinion tend to converge; in the middle, instead, some non-extreme disagreement can occur.

However, the continuous approach adopted by \cite{immorlica} does not fit with many real world critical contexts, such as in voting, in which we usually have a discrete and limited number of candidates around which opinions should converge. For this reason, in this work, we will depart from the work of \cite{immorlica}, by focusing on the discrete opinion formation process, as defined by \cite{goldberg}.

\paragraph*{Our Contribution} 
In this work, we evaluate the impact of social media recommendations with respect to their influence on the ability of users to reach a consensus. Indeed, the likelihood that a consensus is reached has been widely adopted for comparing different opinion models, and for evaluating the impact that variations on the model may have on opinion formation \citep{DeGroot,HK,fanelli} (see Related Works for more details). Note also that consensus is a required goal in many practical settings: from the analysis of collective behaviour of flocks and swarms \citep{olfati2006flocking,savkin2004coordinated}, to sensor fusion \citep{olfati2005consensus}, to formation control for multi-robot systems \citep{egerstedt2001formation,tanner2004leader,lin2005necessary}.

In this work we first focus on a very simple class of networks, namely \emph{symmetric two-block model}, already analyzed by \cite{immorlica}, in which agents are separated in two components, and agents from the same component have the same initial opinion and receive the same influence from individuals inside and outside their component. Despite of the simplicity of this network, it highlights a very important difference with respect to the results given by \cite{immorlica}: namely, the impact of the social media not only depends on the strength of the social media influence, but also on the \emph{homophily ratio}, that is how much individuals weight their similars with compared to others. This measure has been often showed to be a key attribute in opinion formation dynamics (see, e.g., \citep{dandekar}). Hence, our results show a better alignment with respect to the previous literature than the one given by \cite{immorlica}.

Specifically, we will show that whenever the strength of the social media influence is large, consensus is essentially impossible to achieve whenever the initial opinions of the two groups are far from each other. Interestingly, for these initial opinions, consensus is also impossible to achieve when the homophily ratio is large, but the strength of the social media is very small. We also show how the chance of reaching a consensus changes with respect to how extreme are the initial opinions in the two groups. Finally, when initial opinions are instead close to each other, we show that consensus is always possible, but the likelihood of reaching a consensus increases when the homophily ratio is large or the strength of the social media is low.

We conjecture that these findings hold not only for the simple symmetric two-block model, but also for more complex networks whenever initial opinions can be partitioned in two macro-blocks. As an evidence of this conjecture, we provide a massive set of experiments both on synthetic and on real networks: all our experiments show that the dynamics essentially follows the behaviour prescribed by results on the symmetric two-block model as the strength of the social media, the homophily ratio, and the value of initial opinions change.

\paragraph*{Related Works}
Several extensions have been recently proposed to the seminal models by DeGroot and by Friedkin and Johnsen (and their discrete counterparts), by considering only limited interaction by agents \citep{fotakis1,fotakis2}, or an evolving environment \citep{HK,bahawalkar,bilo,ventre,fanelli}, or both repulsive and attractive interaction \citep{ijcai16,acar}. Despite their larger adherence with many real world aspects, however none of these variants has received the same level of interest as the models by DeGroot and by Friedkin and Johnsen. Moreover, the simplicity of the latter models allows a more clear analysis of the influence of social media, by untying it from the complexities of the former models.

Consensus in opinion formation has been object of intense research since the seminal work of \cite{DeGroot}. Indeed, most works aim to evaluate opinion formation models based on their ability to reach a consensus \citep{HK,fanelli}. Many other works try to characterize the parameters that enable a given dynamics to reach consensus \citep{feldman,mossel,greco}. In this work we pursue
both approaches:
on one side, we investigate on how the social media recommendations may vary the probability that a consensus is reached; on the other side, we identify the settings, in terms of homophily ratio, strength of the social media influence, and initial agents' opinions, where the probability of consensus is larger.

The study of the influence of a (non-manipulating) social media on the opinion formation process has been initiated by \cite{immorlica}, where, as described above, the focus is on continuous opinions, while we here consider discrete opinions.

Many works instead focus on manipulation of the opinion formation process in social networks, in particular in the framework of election manipulation.
The first and most studied manipulation technique is \emph{seeding}, that consists in selecting a set of sources of news from which to start a successful viral campaign in favour of a designed candidate or against her competitors \citep{KTT,wilder,coro,coro2,castiglioni,elkind}.
%
Another kind of manipulation that received large interest consists in adding or deleting links \citep{sina,savarese,elkind,castiglioni}: these may be implemented by social media by hiding the content of a ``friend'' or ``neighbour'' in the social network, or promoting the content of non-friends (e.g., as advertised content or through the mechanism of friend suggestion).
%
A last kind of manipulation that recently received a lot of interest consists in guiding the dynamics by influencing the order in which agents are prompted to update their opinion (e.g., by delaying the visualization of a news) so that they will update only when there are enough friends to push them towards the desired candidate \citep{wine15,wine17,aamas,greco,greco2}.
%
We note that, as described above, our work differs from all these works, since we are not considering a social media operating with the goal of promoting a specific candidate.

\section{The Model}
\label{secTheModel}
We consider $n \geq 2$ agents whose relationships are embedded into a social network modelled as an undirected weighted graph $G = (V, E, w)$, where each vertex of the graph represents an agent. Each agent $i$ keeps an opinion $x_i^0 \in \Theta = \{-1, -1 + \delta, \ldots, -\delta, 0, \delta, \ldots, 1-\delta, 1\}$ for some $0 < \delta \leq \frac{1}{2}$. We will sometimes denote $\delta$ as the \emph{discretization factor} of $\Theta$.
One may think about $\Theta$ as the set of alternatives (e.g., candidates to an election) on which agents' opinions need to converge: note that we are assuming that there is no way for an agent to express an opinion that does not corresponds to an alternative, as it is the case, whenever opinions are expressed, e.g., through polls. Observe that $\vert \Theta \vert = 2 \left\lceil\frac{1}{\delta}\right\rceil + 1$. Let $\bfx = (x_1, \ldots, x_n)$ be a profile of opinions held by players, where $x_i$ is the opinion kept of player $i$. 

The opinions of agents are influenced by their social relationships. Specifically, we assume that, for each edge $(i,j) \in E$, opinions of agents $i$ and $j$ are mutually influenced and the weight $w_{ij} > 0$ of the edge models the strength of this influence.

Moreover, we assume that the opinion of an agent can be also influenced by recommendations received directly from the social media and not diffused through their own neighbours. We assume that the social media can present to the agents different recommendations, tailoring them on their interests. In particular, we assume that the social media has a discrete subset $\Omega$ of $[-1,1]$, representing the available information, and it decides to present to an agent with opinion $x$ the information $s(x) \in \Omega$, where the function $s \colon \Theta \rightarrow \Omega$ models the recommendation procedure adopted by the media. Clearly, since the social media is interested in increasing the engagement of their users to the platform, it is interested in advertising to users information that best matches their profile.
Thus, e.g., in an electoral setting, the social media will recommend right parties to right-oriented agents, left parties to to left-oriented agents, and moderate party to remaining agents.

Thus, at each time step $t$ agents update their opinions depending on the opinions held by their social relations and the recommendations received by the social media. We denote by $\bfx^t$ the profile of opinions held by agents at time $t$.

In this work, following the model introduced by \cite{immorlica}, we will consider a specific choice for $\Omega$ and $s$: in particular, we assume $\Omega = \{-1, 0, 1\}$ (we will sometimes refer to the elements of $\Omega$ as ``extreme left'', ``extreme moderate'', and ``extreme right'' information or opinions), and assume $s$ being a symmetric threshold function such that $s(x) = -1$ if $x < -\lambda$, $s(x) = 1$ if $x > \lambda$, and $s(x) = 0$ otherwise, for some $0 < \lambda < 1$. While this choice is clearly simplifying the model, it still leads to interesting results about how these social media recommendations may affect the chance that agents may reach a consensus.
Moreover, w.l.o.g., we will assume that $\lambda = 1/2$. This essentially means that the social media shows to each agent the information that is closest to her opinion (by breaking ties in favour of the ``moderate'' information). We remark that all our results about the impact of social media recommendations may be easily extended to arbitrary values of $\lambda$.

The combined influence of neighbours and social media recommendations may lead an agent to update her opinion. In this work, we follow the principles of the model presented by \cite{DeGroot} to represent how the opinion is updated. Specifically, since our focus is on a setting with discrete opinions, we will adapt to our model the discrete generalization of the DeGroot model defined by \cite{goldberg}: at each step $t \geq 1$, agent $i$ will choose the opinion $x$ that minimizes
$c_i(x,\bfx^{t-1}) = b(x - s(x_i^{t-1}))^2 + \sum_{j \colon (i,j) \in E} w_{ij} (x - x_j^{t-1})^2$,
where $b > 0$ is the weight of the influence of the social media on agents, and $\bfx^{t-1} = (x^{t-1}_1, \ldots, x^{t-1}_n)$ is the opinion profile at the previous time step. We notice that this setting can be equivalently described as a game: agents are the players, opinions are their strategies, and the function $c_i$ is the cost function of player $i$. According to this game-theoretic viewpoint, the opinion update consists essentially of selecting the \emph{best-response} strategy, i.e. the one that minimizes the cost of the player given the strategies currently selected by other players and the social media.

We say that an opinion profile $\bfx^t = (x_1^t, \ldots, x_n^t)$ is a \emph{consensus} (on opinion $\overline{x}$)  if $x_i^t = \overline{x}$ for every $i$. Moreover, we say that an opinion profile $\bfx^t = (x_1^t, \ldots, x_n^t)$ is \emph{stable} if it is a Nash equilibrium of the corresponding game, i.e. $x_i^t$ minimizes $c_i(x,x^{t})$ for every agent $i$. It is easy to see that a consensus on an extreme opinion, say, e.g., $1$, is always a stable profile. Hence, in this opinion game, a Nash Equilibrium always exits.

Although a stable profile always exists, for given $G$ and $b$, there may be multiple stable opinion profiles, and which one is reached depends on the way in which agents update their opinions.
In the literature, the DeGroot model has been associated to different update rules:
The most popular rules are: i) \emph{synchronous rule}, where at each time step $t$ all the agents update their options; ii) \emph{asynchronous rule}, where at each time step $t$, a single agent, arbitrarily chosen, is allowed to update his/her opinion. 

In the next section we will focus on the \emph{synchronous} case.
We will analyze the dynamics with the \emph{asynchronous} update rule later in section \ref{seAsync}.


\section{Synchronous Updates}
\label{sync}
In this section we will analyze the dynamics when updates are synchronous. As we will see, the synchronism of the updates allows representing the opinions' dynamics in a simple and tractable way. 
Through the analysis of the dynamics with synchronous updates, we obtain interesting findings about the effect of social media recommendations on opinion consensus. 
We will see in section \ref{seAsync} that these findings extend even to the asynchronous case. 
\subsection{Symmetric Two-Block Model}
\label{sec3}
We will start our study by focusing on a simple setting: an \emph{$(\ain,\aout)$-symmetric two-block model}. This is defined as follows: given an undirected graph $G=(V, E)$ and a value $b \geq 0$, we partition the set $V$ of agents in two subsets, $L$ and $R$ such that, for each agent $u \in P$ with $P \in \{L, R\}$, we set $x_u^0 = x_P^0$.
Moreover, we set weights $w_{ij}$ for each edge $(i,j) \in E$, such that for each agent $i \in P$, we have that $\sum_{\begin{subarray}{c}j \in P\\(i,j) \in E\end{subarray}} w_{ij} = \ain$ and $\sum_{\begin{subarray}{c}j \in \notP\\(i,j) \in E\end{subarray}} w_{ij} = \aout$, where $\notP = \{L, R\} \setminus P$, and $\ain, \aout > 0$. Roughly speaking, in a symmetric two-block model we assume that agents hold only two opinions and we can partitionate them in two symmetric communities depending on their opinions. Moreover, the cumulative influence that an agent receives from members of her own community is the same for each agent, namely $\ain$. Similarly, the influence that an agent receives from members of the opposite community is the same for each agent, namely $\aout$. The ratio $h = \frac{\ain}{\aout}$ is sometimes termed \emph{homophily ratio} \citep{dandekar}, and the ratio $\tb = \frac{b}{\aout}$ is termed the \emph{relative amount of media influence}. These quantities will play an important role in our analysis.

We will investigate on how the influence $b$ of the social network and the homophily ratio $h$ affect the probability of reaching a consensus in this setting and the type of the consensus obtained.
In particular, we will show that if either the homophily ratio $h$ or the media influence $b$ are very large, consensus is very hard to achieve when the initial opinions in the communities are divergent,
i.e., $x_i^0 < 0$ for each $i \in P$, and $x_j^0 > 0$ for each $j \in \notP$. This result follows from the fact that both homophily and media influence tend to extremize the opinions of the two groups, by leading them to diverge.

Moreover, we will show that, if the homophily ratio is large, then convergence to consensus becomes hard even for low values of media influence. Interestingly, this latter holds regardless of the number of opinions in $\Theta$: in particular, it holds even if this number is very large (and thus the parameter $\delta$ is very small). However, for small values of $\delta$ and $b$ our model resembles the DeGroot model, for which it is known that a consensus is always reached. In other words, our results prove that the consensus property of DeGroot model is not robust even to a small discretization of the opinion space.

Moreover, we also study the type of the consensus reached by the players.
In particular, we show that when $b$ is great, consensus is only possible on extreme opinions, namely $-1$, $0$ and $1$, even for small values of $h$, and even for non-diverging initial opinion profiles.

In conclusion, our results show that in the two-block model the effect of the social media influence is limited on communities with a large homophily ratio. However, when agents become more prone to heterogeneous influence, then the social media may play an important role, by making consensus either harder to reach, or reachable only on extreme opinions.

\subsubsection{Preliminary Results}
\label{preliminary}
In the following, we will give some useful characterizations of \emph{feasible} opinion profiles, i.e., profiles that can be reached during the evolution of the dynamics, and stable profiles.
\begin{lemma}
\label{lem:TwoPlayerModel}
Given an $(\ain,\aout)$-symmetric two-block model $G=(L \cup R, E, w)$ and a social media influence $b$, an opinion profile $(x_1, \ldots, x_n)$ is feasible only if $x_i = x_L$ for every $i \in L$, $x_j = x_R$ for every $j \in R$.
\end{lemma}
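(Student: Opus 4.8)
The plan is to prove the slightly stronger statement that \emph{every} profile $\bfx^t$ arising in the synchronous dynamics, for every $t \geq 0$, satisfies $x_i^t = x_L^t$ for all $i \in L$ and $x_j^t = x_R^t$ for all $j \in R$; the lemma then follows, since a feasible profile is by definition one of the $\bfx^t$. I would argue by induction on $t$. The base case $t = 0$ is immediate from the definition of the $(\ain,\aout)$-symmetric two-block model, which stipulates $x_u^0 = x_P^0$ for every $u \in P$ and $P \in \{L,R\}$.

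For the inductive step, assume the claim at time $t-1$, so there are values $x_L^{t-1}$ and $x_R^{t-1}$ with $x_i^{t-1} = x_L^{t-1}$ for all $i \in L$ and $x_j^{t-1} = x_R^{t-1}$ for all $j \in R$. Fix any $i \in L$ and rewrite its cost function, splitting the neighbour sum by the block each neighbour lies in:
\[
c_i(x,\bfx^{t-1}) = b\bigl(x - s(x_i^{t-1})\bigr)^2 + \sum_{j \in L,\,(i,j)\in E} w_{ij}\bigl(x - x_L^{t-1}\bigr)^2 + \sum_{j \in R,\,(i,j)\in E} w_{ij}\bigl(x - x_R^{t-1}\bigr)^2.
\]
Now $s(x_i^{t-1}) = s(x_L^{t-1})$, and by the defining identities of the model $\sum_{j \in L,\,(i,j)\in E} w_{ij} = \ain$ and $\sum_{j \in R,\,(i,j)\in E} w_{ij} = \aout$, so
\[
c_i(x,\bfx^{t-1}) = b\bigl(x - s(x_L^{t-1})\bigr)^2 + \ain\bigl(x - x_L^{t-1}\bigr)^2 + \aout\bigl(x - x_R^{t-1}\bigr)^2,
\]
which does not depend on the particular choice of $i \in L$. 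Hence all agents of $L$ minimise exactly the same function over $\Theta$ at step $t$, and therefore all adopt a common opinion $x_L^t$; the symmetric argument, with the roles of $L$ and $R$ exchanged, yields a common $x_R^t$. This closes the induction.

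The only point requiring a word of care is the discreteness of $\Theta$: the map $x \mapsto c_i(x,\bfx^{t-1})$ is a strictly convex quadratic (leading coefficient $b + \ain + \aout > 0$), so its unconstrained minimiser is unique, but its restriction to the grid $\Theta$ could be minimised at two adjacent grid points when that minimiser falls exactly midway between them. Since any fixed rule used to break such ties is a function of the cost function alone, and agents of the same block have identical cost functions, they resolve the tie identically — the block symmetry is never broken. I therefore expect no genuine obstacle here: the whole content of the lemma is that the initial block symmetry is propagated by the synchronous best-response update.
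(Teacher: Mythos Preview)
Your proof is correct and follows essentially the same approach as the paper: both argue by induction on the time step, showing that all agents in a block face an identical cost function (via the defining identities $\sum_{j\in P} w_{ij}=\ain$, $\sum_{j\in\notP} w_{ij}=\aout$ and the shared initial opinion) and hence update to the same value. Your version is slightly more careful in stating the induction explicitly and in flagging the tie-breaking issue, which the paper leaves implicit.
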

\begin{proof}
From the definition of cost,
the opinion of agent $i$ at time $t+1$ is equal to 
\begin{equation*}
x_i^{t+1} = \argmin_{y \in \Theta}\left\{b(y - s(x_i^t))^2 + \sum_{j: (i,j) \in E} w_{i,j} (y-x_j^t)^2\right\}.
\end{equation*}
Therefore, in an $(\ain,\aout)$-symmetric two-block model, for each agent $i \in P$ the opinion at step $1$ is: 
\begin{equation*}
x_i^{1} = \argmin_{y \in \Theta}\{b(y - s(x_P^0))^2 + \ain(y-x_P^0)^2 + \aout (y-x_{\notP}^0)^2\}.
\end{equation*}
Consequently, at step $1$ $x_i^1 = x_L^1$ for every $i \in L$, and $x_j^1=x_R^1$ for every $j \in R$.

The lemma follows by iteratively applying the same argument for all the following steps.
\end{proof}
Next, we provide a characterization of best responses. This is our key lemma.
\begin{lemma}
\label{lem:boundOnAvg}
 Given an $(\ain,\aout)$-symmetric two-block model $G=(L \cup R, E, w)$, a social media influence $b$, and a feasible opinion profile $\bfx$, $x^*$ is a best-response for agent $i \in P$, for $P \in \{L, R\}$, in the profile $\bfx$ only if
 \begin{equation}
  x^* - \frac{\delta}{2} \leq \frac{bs(x_P)+\aout x_{\notP}+\ain x_P}{b+\aout+\ain} \leq x^* + \frac{\delta}{2},
 \end{equation}
 where $\notP = \{L, R\} \setminus P$, $x_L = x_i$ for some $i \in L$, and $x_R = x_j$  for some $j \in R$.
\end{lemma}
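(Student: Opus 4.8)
The plan is to exploit the fact that, once the profile $\bfx$ is fixed, the quantity that agent $i$ minimizes is a one‑dimensional strictly convex quadratic in the variable $y$, so that its minimizer over the grid $\Theta$ is a grid point nearest to the (real) unconstrained minimizer — and that minimizer is exactly the weighted average in the statement.

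First I would invoke Lemma~\ref{lem:TwoPlayerModel}: since $\bfx$ is feasible, every agent in $L$ holds $x_L$ and every agent in $R$ holds $x_R$, so for $i \in P$ the cost of playing $y$ collapses to
\begin{equation*}
c_i(y,\bfx) = b\bigl(y - s(x_P)\bigr)^2 + \ain (y - x_P)^2 + \aout (y - x_{\notP})^2 .
\end{equation*}
Expanding, this is a quadratic in $y$ with leading coefficient $b + \ain + \aout > 0$, hence strictly convex; completing the square one writes $c_i(y,\bfx) = (b+\ain+\aout)(y-\mu)^2 + C$, where $C$ is independent of $y$ and
\begin{equation*}
\mu = \frac{b\, s(x_P) + \aout x_{\notP} + \ain x_P}{b + \aout + \ain}
\end{equation*}
is the unique real minimizer, obtained from the first‑order condition.

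Next I would argue that a best response $x^* \in \Theta$ must be essentially as close to $\mu$ as any grid point. Let $z$ be a point of $\Theta$ nearest to $\mu$ (it exists since $\Theta$ is finite). Because $c_i(\cdot,\bfx) - C$ is a strictly increasing function of $|y-\mu|$, optimality of $x^*$ gives $c_i(x^*,\bfx) \le c_i(z,\bfx)$, hence $(x^* - \mu)^2 \le (z - \mu)^2$, i.e.\ $|x^* - \mu| \le |z - \mu|$. It then remains to bound $|z-\mu|$: here I would observe that $\mu$ is a convex combination of $s(x_P) \in \{-1,0,1\}$, $x_P$ and $x_{\notP}$, all lying in $[-1,1]$, so $\mu \in [-1,1]$; since consecutive elements of $\Theta$ are at distance at most $\delta$, the nearest one satisfies $|z - \mu| \le \delta/2$. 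Combining, $|x^* - \mu| \le \delta/2$, which is exactly $x^* - \delta/2 \le \mu \le x^* + \delta/2$.

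I do not expect a genuine obstacle — the content is "minimize a convex quadratic over a uniform grid". The only points requiring a little care are (i) verifying that the leading coefficient is $b+\ain+\aout$ and cannot vanish (it cannot, as $b,\ain,\aout>0$), and (ii) the boundary of $\Theta$: when $1/\delta \notin \mathbb{N}$ the spacing of $\Theta$ near $\pm 1$ may be smaller than $\delta$, but it is never larger, so a within‑$\delta/2$ grid point still exists for every $\mu \in [-1,1]$.
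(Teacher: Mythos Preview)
Your argument is correct and follows essentially the same route as the paper: reduce the cost via Lemma~\ref{lem:TwoPlayerModel} to a strictly convex quadratic in $y$, identify its real minimizer as the weighted average $\mu$, observe $\mu\in[-1,1]$, and conclude that any discrete best response lies within $\delta/2$ of $\mu$. Your treatment is in fact slightly more careful than the paper's in handling the boundary of $\Theta$ when $1/\delta\notin\mathbb{N}$.
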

\begin{proof}
By Lemma \ref{lem:TwoPlayerModel}, all the agents in the same community $P$ have the same best response in $\bfx$, and, from the definition of cost, 
it is
\begin{align*}
x^* & = \argmin_{y \in \Theta}\left\{b(y - s(x_P^t))^2 + \ain(y-x_P^t)^2 + \aout (y-x_{\notP}^t)^2\right\}\\
& = \argmin_{y \in \Theta}\left\{(\aout+\ain+b)y^2-2y(bs(x_P^t)+\aout x_{\notP}^t+\ain x_P^t) + \right.\\
& \qquad \qquad \qquad \left.bs(x_P^t)^2+\aout(x_{\notP}^t)^2+\ain(x_P^t)^2\right\}.
\end{align*}
Since, by definition, $\aout+\ain+b > 0$, then the argument of $\argmin$ describes a parabola with concavity upwards. If $\Theta$ was a continuous interval, the minimum value would be achieved by setting $y = \frac{bs(x_P^t)+\aout x_{\notP}^t+\ain x_P^t}{\aout+\ain+b} \in [-1, 1]$, where the membership in this interval follows since $\vert s(x_P^t) \vert \leq 1$, $\vert x_{\notP}^t \vert \leq 1$, and $\vert x_P^t \vert \leq 1$. However, since $\Theta$ is discrete, then we have that
\begin{equation*}
x^{*} = \argmin_{y \in \Theta}\left\{\left(y - \frac{bs(x_P)+\aout x_{\notP}+\ain x_P}{b+\aout+\ain}\right)^2\right\}.
\end{equation*}
Since $\delta$ is the distance among two consecutive elements in $\Theta$, then there is an $y \in \Theta$ such that $\left\vert y - \frac{bs(x_P)+\aout x_{\notP}+\ain x_P}{b+\aout+\ain}\right\vert \leq \frac{\delta}{2}$, from which the lemma follows.
\end{proof}
Finally, next lemma provides a characterization of the stable opinion profiles. To this aim, we define the \emph{relative amount of media influence} as $\tb = \frac{b}{\aout}$.
\begin{lemma}
\label{lem:base_systems}
 Given an $(\ain,\aout)$-symmetric two-block model $G=(L \cup R, E, w)$, and a social media influence $b$, a feasible opinion profile $\bfx$ is stable only if the following conditions are satisfied:
 \begin{equation*}
  \begin{cases}
   \tb \left(s(x_L) - x_L - \frac{\delta}{2}\right) \leq \left(x_L - x_R\right) + \frac{\delta}{2} \left(h + 1\right);\\
   \tb \left(s(x_L) - x_L + \frac{\delta}{2}\right) \geq \left(x_L - x_R\right) - \frac{\delta}{2} \left(h + 1\right);\\
   \tb \left(s(x_R) - x_R - \frac{\delta}{2}\right) \leq \left(x_R - x_L\right) + \frac{\delta}{2} \left(h + 1\right);\\
   \tb \left(s(x_R) - x_R + \frac{\delta}{2}\right) \geq \left(x_R - x_L\right) - \frac{\delta}{2} \left(h + 1\right).
  \end{cases}
 \end{equation*}
\end{lemma}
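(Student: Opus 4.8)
The plan is to read off stability as the statement that every agent is already playing a best response, and then to feed this into Lemma~\ref{lem:boundOnAvg} and do elementary algebra. Concretely: let $\bfx$ be feasible and stable. By Lemma~\ref{lem:TwoPlayerModel} feasibility guarantees that all agents of $L$ share a common opinion $x_L$ and all agents of $R$ a common opinion $x_R$, so the quantities $x_P$ and $x_{\notP}$ appearing in Lemma~\ref{lem:boundOnAvg} are well defined. Stability means that for each $P \in \{L,R\}$ the opinion $x_P$ is itself a best response for an agent $i \in P$; hence I may apply Lemma~\ref{lem:boundOnAvg} with $x^\ast = x_P$, obtaining $x_P - \tfrac{\delta}{2} \leq \frac{b\,s(x_P) + \aout x_{\notP} + \ain x_P}{b + \aout + \ain} \leq x_P + \tfrac{\delta}{2}$.

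Next I would clear the denominator, which is legitimate since $b + \aout + \ain > 0$, and subtract $(b+\aout+\ain)x_P$ throughout. The terms in $x_P$ collapse: the numerator minus $(b+\aout+\ain)x_P$ equals $b\,(s(x_P) - x_P) + \aout\,(x_{\notP} - x_P)$, so the double inequality becomes
\[
-\tfrac{\delta}{2}\,(b+\aout+\ain) \;\leq\; b\,(s(x_P) - x_P) + \aout\,(x_{\notP} - x_P) \;\leq\; \tfrac{\delta}{2}\,(b+\aout+\ain).
\]
Dividing everything by $\aout > 0$ and using the definitions $\tb = b/\aout$ and $h = \ain/\aout$ turns this into $-\tfrac{\delta}{2}(\tb + h + 1) \leq \tb\,(s(x_P) - x_P) + (x_{\notP} - x_P) \leq \tfrac{\delta}{2}(\tb + h + 1)$.

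Finally I would specialize to $P = L$ (so $\notP = R$) and split the two-sided bound into its two halves, moving the $x_R - x_L$ term to the right and absorbing the lone $\pm\tb\,\tfrac{\delta}{2}$ contribution on each side back inside the factor multiplying $\tb$ (i.e.\ writing $\tb\,(s(x_L) - x_L) \mp \tb\tfrac{\delta}{2} = \tb\,(s(x_L) - x_L \mp \tfrac{\delta}{2})$). This yields exactly the first two inequalities of the system; taking $P = R$ and repeating, with the roles of $L$ and $R$ interchanged, yields the last two. I do not expect any genuine obstacle here — the only thing requiring care is the bookkeeping of where each $\tfrac{\delta}{2}$ term ends up after the rearrangement, so that the final form matches the stated system verbatim.
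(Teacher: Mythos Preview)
Your proposal is correct and follows essentially the same approach as the paper: invoke Lemma~\ref{lem:TwoPlayerModel} to reduce to the two values $x_L,x_R$, apply Lemma~\ref{lem:boundOnAvg} with $x^\ast=x_P$ to obtain the double inequality, then divide by $\aout$ and rearrange. The paper compresses the rearrangement into a single ``by simple algebraic manipulations'' step, whereas you spell out the intermediate subtraction and the regrouping of the $\tfrac{\delta}{2}\tb$ terms, but there is no substantive difference.
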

\begin{proof}
By Lemma \ref{lem:TwoPlayerModel} an opinion profile is feasible only if every agent $i \in P$ with $P \in \{L, R\}$ has the same opinion $x_P$. Moreover, this opinion profile is stable if for every agent $i \in P$ its best response with respect to $\bfx$ is still $x_P$. By Lemma~\ref{lem:boundOnAvg} we then have that if $\bfx$ is stable then it must hold that:
\begin{equation*}
\begin{cases}
x_L - \frac{\delta}{2} \leq \frac{bs(x_L)+\aout x_{R}+\ain x_L}{b+\aout+\ain} \leq x_L + \frac{\delta}{2};\\
x_R - \frac{\delta}{2} \leq \frac{bs(x_R)+\aout x_{L}+\ain x_R}{b+\aout+\ain} \leq x_R + \frac{\delta}{2}.
\end{cases}
\end{equation*}
By simple algebraic manipulations, we can rewrite above conditions as follows: 
\begin{equation*}
  \begin{cases}
   \frac{b}{\aout} \left(s(x_L) - x_L - \frac{\delta}{2}\right) \leq \left(x_L - x_R\right) + \frac{\delta}{2} \left(\frac{\ain}{\aout} + 1\right);\\
   \frac{b}{\aout} \left(s(x_L) - x_L + \frac{\delta}{2}\right) \geq \left(x_L - x_R\right) - \frac{\delta}{2} \left(\frac{\ain}{\aout} + 1\right);\\
   \frac{b}{\aout} \left(s(x_R) - x_R - \frac{\delta}{2}\right) \leq \left(x_R - x_L\right) + \frac{\delta}{2} \left(\frac{\ain}{\aout} + 1\right);\\
   \frac{b}{\aout} \left(s(x_R) - x_R + \frac{\delta}{2}\right) \geq \left(x_R - x_L\right) - \frac{\delta}{2} \left(\frac{\ain}{\aout} + 1\right).
  \end{cases}
 \end{equation*}
The lemma follows by observing that $\tb = \frac{b}{\aout}$ and $\frac{\ain}{\aout} = h$.
\end{proof}

Next lemmas prove some properties of the best-response opinion in presence of a social media influence that will be useful in characterizing the type of stable profile achieved by our opinion dynamics. 
Specifically, Lemma~\ref{lem:far} provides conditions for an agent with an opinion close to the extremes, $-1$ and $1$, to change her idea and adopt an opinion that is far from these extremes. Lemma~\ref{lem:close}, Lemma~\ref{lem:close_zero}, Lemma~\ref{lem:zero}, and Lemma~\ref{lem:far_zero} provide instead conditions for an agent to adopt an opinion of opposite sign with respect to her actual opinion. 
All these lemmas are proved in a very similar way, hence, for the seek of readability, we will only show proofs of Lemma~\ref{lem:far} and Lemma~\ref{lem:close}. We refer interested reader to Appendix~\ref{apx:pre_res} for the remaining proofs. 
In what follows we will assume that $\lambda \in \Theta$. This is without loss of generality: indeed, all results below still hold by simply replacing $\lambda$ with the largest $\lambda' \in \Theta$ smaller than $\lambda$, since for every $x \in \Theta$ it must be the case that $x > \lambda$ (resp. $x < -\lambda$) if and only if $x > \lambda'$ (resp. $x < -\lambda'$).
\begin{lemma}
 \label{lem:far}
 Given an $(\ain,\aout)$-symmetric two-block model $G=(L \cup R, E, w)$ and a social media influence $b$, if $\left\vert x_i^t \right\vert > \lambda$, then the opinion of the agent $i$ at next step is $\left\vert x_i^{t+1}\right\vert \leq \lambda$ only if $\tb \leq \tau_1(h)$, where $\tau_1(h) = \frac{2+2\lambda + \delta - \delta h}{2-2\lambda-\delta}$.
\end{lemma}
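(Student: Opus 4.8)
The plan is to translate the two hypotheses --- that $i$'s current opinion lies beyond the threshold and that its next opinion does not --- into a single linear inequality in $\tb$, using the best-response characterization of Lemma~\ref{lem:boundOnAvg}. First I would invoke the symmetry of $s$ and of $\Theta$ to assume without loss of generality that $x_i^t > \lambda$; the case $x_i^t < -\lambda$ follows by negating every opinion in the profile. Let $P\in\{L,R\}$ be the block containing $i$. Since $\bfx^t$ is feasible, Lemma~\ref{lem:TwoPlayerModel} lets me write $x_P = x_i^t$ and $x_{\notP}$ for the (well-defined) common opinions of the two blocks at time $t$. Because $x_P > \lambda$ and $s$ is the threshold function, $s(x_P)=1$; and --- the one point where discreteness enters --- since $\lambda\in\Theta$ (as assumed just before the statement) and $x_P\in\Theta$, the strict inequality $x_P>\lambda$ upgrades to $x_P\ge\lambda+\delta$. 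Trivially $x_{\notP}\ge-1$.

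Next, $|x_i^{t+1}|\le\lambda$ gives in particular $x_i^{t+1}\le\lambda$, so applying Lemma~\ref{lem:boundOnAvg} to the feasible profile $\bfx^t$ with $x^*=x_i^{t+1}$ yields
\[
  \frac{b\,s(x_P)+\aout x_{\notP}+\ain x_P}{b+\aout+\ain}\;\le\;x_i^{t+1}+\frac{\delta}{2}\;\le\;\lambda+\frac{\delta}{2}.
\]
Substituting the worst-case values $s(x_P)=1$, $x_{\notP}=-1$, $x_P=\lambda+\delta$ --- which only decrease the left-hand side and hence preserve the inequality --- I obtain
\[
  \frac{b-\aout+\ain(\lambda+\delta)}{b+\aout+\ain}\;\le\;\lambda+\frac{\delta}{2}.
\]
From here the argument is purely computational: clear the positive denominator, move every term containing $b$ to the left, and collect to get $b\bigl(1-\lambda-\tfrac{\delta}{2}\bigr)\le\aout\bigl(1+\lambda+\tfrac{\delta}{2}\bigr)-\ain\tfrac{\delta}{2}$; then divide by $\aout$, note that $1-\lambda-\tfrac{\delta}{2}>0$ since $\lambda=\tfrac12$ and $\delta\le\tfrac12$, and substitute $\tb=b/\aout$ and $h=\ain/\aout$. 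This leaves exactly $\tb\le\frac{2+2\lambda+\delta-\delta h}{2-2\lambda-\delta}=\tau_1(h)$.

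The computation is mechanical, so the only place that really requires care is the sharpening $x_P>\lambda\Rightarrow x_P\ge\lambda+\delta$: if one used only $x_P\ge\lambda$ the collected inequality would read $b\bigl(1-\lambda-\tfrac{\delta}{2}\bigr)\le\aout\bigl(1+\lambda+\tfrac{\delta}{2}\bigr)+\ain\tfrac{\delta}{2}$, giving a weaker bound with $+\delta h$ in the numerator rather than $-\delta h$, so discreteness is genuinely needed for the stated constant. One must also be careful to pick the extreme values of $x_{\notP}$ and $x_P$ in the direction that shrinks the left-hand side, so that the resulting inequality is genuinely implied by the hypothesis and the chain of $\le$'s goes through.
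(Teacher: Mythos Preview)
Your proposal is correct and follows essentially the same approach as the paper: reduce by symmetry to $x_i^t>\lambda$, invoke Lemma~\ref{lem:boundOnAvg} to get the averaging inequality with $s(x_P)=1$, then use the extreme admissible values $x_P=\lambda+\delta$ and $x_{\notP}=-1$ and solve for $\tb$. The only cosmetic difference is that the paper first isolates $\tb$ and then maximizes the right-hand side over $x_P,x_{\notP}$, whereas you substitute the extremes directly into the averaging inequality and then solve; both yield the same bound, and your remarks on why the discreteness step $x_P\ge\lambda+\delta$ is essential for the sign of the $\delta h$ term are on point.
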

\begin{proof}
We consider only the case that $x_i^t > \lambda$. The case for $x_i^t < -\lambda$ is symmetric and hence omitted.

Recall that, by Lemma~\ref{lem:TwoPlayerModel}, $x_i^t = x_P^t$, where $P$ is the block $i$ belongs to, and every $j \notin P$ has $x_j^t = x_{\notP}^t$.
Since $x_i^t > \lambda$, then $s(x_P^t) = 1$. Thus, by Lemma \ref{lem:boundOnAvg}, $x_i^{t+1} \leq \lambda$ only if $\frac{b+\aout x_{\notP}^t+\ain x_P^t}{b+\ain+\aout} \leq \lambda+\frac{\delta}{2}$.
By dividing both sides by $\aout$ and recalling that $\frac{b}{\aout} = \tb$ and $\frac{\ain}{\aout} = h$, we have that $x_i^{t+1} \leq \lambda$ only if
\begin{equation}
\label{eq:condlem4}
 \tb \leq \frac{-x_{\notP}^t+\lambda+\frac{\delta}{2}+h(-x_P^t+\lambda+\frac{\delta}{2})}{1-\frac{\delta}{2}-\lambda}.
\end{equation}
It is immediate to check that the r.h.s. of \eqref{eq:condlem4} is maximized by taking $x_P^t = \lambda + \delta$, and $x_{\notP}^t = -1$. By substituting these values in \eqref{eq:condlem4}, we achieve that 
$x_i^{t+1}\leq \lambda$ only if $\tb \leq \frac{2+2\lambda + \delta - \delta h}{2-2\lambda-\delta}$, as desired.
\end{proof}

\begin{lemma}
\label{lem:close}
 Given an $(\ain,\aout)$-symmetric two-block model $G=(L \cup R, E, w)$ and a social media influence $b$, if $x_i^t \in [-\lambda,0]$ (resp., $x_i^t \in [0,\lambda]$), then $x_i^{t+1} > 0$ (resp., $x_i^{t+1} < 0$) only if $\tb \leq \tau_2(h)$, where $\tau_2(h) = \left(\frac{2}{\delta} - 1\right) - h$.
\end{lemma}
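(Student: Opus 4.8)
The plan is to mirror the argument used for Lemma~\ref{lem:far}, but now exploiting the fact that an opinion lying in $[-\lambda,0]$ (or $[0,\lambda]$) receives the ``moderate'' recommendation $0$. I would prove the statement for $x_i^t \in [-\lambda,0]$ with target $x_i^{t+1} > 0$; the case $x_i^t \in [0,\lambda]$ with $x_i^{t+1} < 0$ is completely symmetric and can be omitted. By Lemma~\ref{lem:TwoPlayerModel} I may write $x_i^t = x_P^t$ and $x_j^t = x_{\notP}^t$ for every $j$ outside $i$'s block $P$, and since $\vert x_P^t\vert \le \lambda$ the recommendation is $s(x_P^t)=0$.

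Next I would invoke Lemma~\ref{lem:boundOnAvg}. As $\Theta$ is discrete with spacing $\delta$ and $0,\delta \in \Theta$, the hypothesis $x_i^{t+1} > 0$ forces $x_i^{t+1} \ge \delta$; substituting $x^* = x_i^{t+1}$ into the lower inequality of Lemma~\ref{lem:boundOnAvg} and using $s(x_P^t)=0$ yields
\[
 \frac{\aout x_{\notP}^t + \ain x_P^t}{b+\aout+\ain} \;\ge\; x_i^{t+1} - \frac{\delta}{2} \;\ge\; \frac{\delta}{2}.
\]
Clearing the positive denominator, dividing through by $\aout$, and recalling $\tb = b/\aout$ and $h = \ain/\aout$, this rearranges to $\tb \le \frac{2}{\delta}\bigl(x_{\notP}^t + h\,x_P^t\bigr) - (h+1)$.

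Finally I would maximize the right-hand side over the admissible range of the two opinions, exactly as in the proof of Lemma~\ref{lem:far}: since $h>0$ and $x_P^t \le 0$, the term $h\,x_P^t$ is largest at $x_P^t = 0$, and $x_{\notP}^t \le 1$, so the derived necessary condition is weakest at $x_{\notP}^t = 1$, $x_P^t = 0$, which gives $\tb \le \frac{2}{\delta} - 1 - h = \tau_2(h)$, as claimed. I do not expect a real obstacle here beyond bookkeeping; the two steps that deserve a line of care are the discreteness implication $x_i^{t+1} > 0 \Rightarrow x_i^{t+1} \ge \delta$ (this is what produces the clean $\frac{\delta}{2}$ on the right-hand side, and hence the factor $\frac{2}{\delta}$ in $\tau_2$) and the verification that $(x_P^t,x_{\notP}^t)=(0,1)$ is indeed the extremal choice making the bound tightest.
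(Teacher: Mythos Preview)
Your proposal is correct and follows essentially the same route as the paper: reduce to the block form via Lemma~\ref{lem:TwoPlayerModel}, use $s(x_P^t)=0$ together with Lemma~\ref{lem:boundOnAvg} to obtain the inequality $\tb \le \frac{2}{\delta}\bigl(x_{\notP}^t + h\,x_P^t\bigr) - (h+1)$, and then maximize the right-hand side at $(x_P^t,x_{\notP}^t)=(0,1)$. If anything, your write-up is slightly more careful than the paper's in making explicit the discreteness step $x_i^{t+1}>0 \Rightarrow x_i^{t+1}\ge \delta$ that yields the $\delta/2$ lower bound on the weighted average.
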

\begin{proof}
We consider only the case that $x_i^t \in [-\lambda,0]$. The case for $x_i^t \in [0,\lambda]$ is symmetric and hence omitted.

Recall that, by Lemma~\ref{lem:TwoPlayerModel}, $x_i^t = x_P^t$, where $P$ is the block at which $i$ belongs, and every $j \notin P$ has $x_j^t = x_{\notP}^t$.
Since $x_i^t \in [-\lambda,0]$, then $s(x_P^t) = 0$. Thus, by Lemma \ref{lem:boundOnAvg}, $x_i^{t+1} \geq 0$ only if $\frac{\aout x_{\notP}^t+\ain x_P^t}{b+\aout+\ain} \leq \frac{\delta}{2}$.
By dividing both sides by $\aout$ and recalling that $\frac{b}{\aout} = \tb$ and $\frac{\ain}{\aout} = h$, we have that $x_i^{t+1}\geq 0$ only if 
\begin{equation}
\label{eq:condlem5}
 \tb \leq \frac{2}{\delta}\left(x_{\notP}^t-\frac{\delta}{2}+h(x_P^t-\frac{\delta}{2})\right).
\end{equation}
It is immediate to check that the r.h.s. of \eqref{eq:condlem5} is maximized by taking $x_P^t = 0$, and $x_{\notP}^t = 1$. By substituting these values in \eqref{eq:condlem5}, we achieve that 
$x_i^{t+1}\geq 0$ only if $\tb \leq \left(\frac{2}{\delta} - 1\right) - h$, as desired.
\end{proof}
\begin{lemma}
\label{lem:close_zero}
 Given an $(\ain,\aout)$-symmetric two-block model $G=(L \cup R, E, w)$ and a social media influence $b$, if $x_i^t < -\lambda$ (resp., $x_i^t > \lambda$), then $x_i^{t+1} > 0$ (resp., $x_i^{t+1} < 0$) only if $\tb \leq \tau_3(h)$, where $\tau_3(h) = \frac{2-\delta-(2\lambda+3\delta)h}{2+\delta}$.
\end{lemma}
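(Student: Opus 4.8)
The plan is to follow verbatim the template used for Lemma~\ref{lem:far} and Lemma~\ref{lem:close}: apply Lemma~\ref{lem:boundOnAvg} to translate the event ``agent $i$ jumps across $0$'' into an inequality on the weighted average of the two block opinions and the recommendation, divide through by $\aout$ to bring in $\tb$ and $h$, and then maximize the resulting upper bound on $\tb$ over the admissible block opinions.

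By symmetry (negate all opinions) it suffices to treat $x_i^t < -\lambda$. By Lemma~\ref{lem:TwoPlayerModel} write $x_i^t = x_P^t$ for the block $P \ni i$ and $x_j^t = x_{\notP}^t$ for $j \notin P$, where $\notP = \{L,R\}\setminus P$. Since $x_i^t < -\lambda$ we have $s(x_P^t) = -1$. If $x_i^{t+1} > 0$ then, because $0 \in \Theta$ and consecutive elements of $\Theta$ are at distance $\delta$, in fact $x_i^{t+1} \geq \delta$; substituting $x^* = x_i^{t+1}$ into the \emph{lower} inequality of Lemma~\ref{lem:boundOnAvg} gives
\[
\frac{\delta}{2}\;\leq\; x_i^{t+1}-\frac{\delta}{2}\;\leq\;\frac{-b+\aout x_{\notP}^t+\ain x_P^t}{b+\aout+\ain}.
\]
Multiplying out, dividing by $\aout$, and using $\tb = b/\aout$ and $h = \ain/\aout$, this is equivalent to
\[
\tb\bigl(1+\tfrac{\delta}{2}\bigr)\;\leq\;\bigl(x_{\notP}^t-\tfrac{\delta}{2}\bigr)+h\bigl(x_P^t-\tfrac{\delta}{2}\bigr).
\]

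It remains to maximize the right-hand side over feasible opinions. The coefficient of $x_{\notP}^t$ is $1 > 0$ and that of $x_P^t$ is $h > 0$, so the bound is largest at $x_{\notP}^t = 1$ and at the largest $x_P^t \in \Theta$ with $x_P^t < -\lambda$, which, under the running assumption $\lambda \in \Theta$, is $x_P^t = -\lambda-\delta$. Plugging these in and solving for $\tb$ (i.e.\ multiplying numerator and denominator by $2$) yields $\tb \leq \frac{2-\delta-(2\lambda+3\delta)h}{2+\delta} = \tau_3(h)$, as desired; the mirror case $x_i^t > \lambda$ follows from the sign flip and produces the same threshold.

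I do not expect a genuine obstacle here beyond two bookkeeping points. First, one must use discreteness to upgrade ``$x_i^{t+1} > 0$'' to ``$x_i^{t+1} \geq \delta$'' and then invoke the correct (lower) side of Lemma~\ref{lem:boundOnAvg}; using $x^* = 0$ would give only a vacuous bound. Second, the extremal admissible value of $x_P^t$ is $-\lambda-\delta$ rather than $-\lambda$ — the latter being ruled out by the strict inequality $x_i^t < -\lambda$ — and it is precisely this shift by an extra $\delta$ that turns a coefficient of the form $2\lambda+\delta$ into the $2\lambda+3\delta$ appearing in $\tau_3$.
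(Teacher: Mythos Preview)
Your proof is correct and follows essentially the same route as the paper: apply Lemma~\ref{lem:boundOnAvg} with $s(x_P^t)=-1$, rearrange to isolate $\tb$, and maximize the right-hand side at $x_{\notP}^t=1$, $x_P^t=-\lambda-\delta$. Your exposition is in fact a bit more careful than the paper's, which writes ``$x_i^{t+1}\ge 0$'' and ``$s(x_P^t)=1$'' (both typos) where you correctly use $x_i^{t+1}\ge\delta$ and $s(x_P^t)=-1$.
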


\begin{lemma}
\label{lem:zero}
 Given an $(\ain,\aout)$-symmetric two-block model $G=(L \cup R, E, w)$ and a social media influence $b$, if $x_i^t \in (0, \lambda]$ (resp., $x_i^t \in [-\lambda,0)$), then $x_i^{t+1} \cdot x_i^t \leq 0$ only if $\tb \geq \tau_4(h)$, where $\tau_4(h) = h - \left(\frac{2}{\delta} + 1\right)$.
\end{lemma}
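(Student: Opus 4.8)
The plan is to follow the template used for the proofs of Lemma~\ref{lem:far} and Lemma~\ref{lem:close}. By the symmetry of the two-block model it suffices to handle the case $x_i^t \in (0,\lambda]$ (the case $x_i^t \in [-\lambda,0)$ being identical up to a sign flip), and for such $x_i^t$ the event $x_i^{t+1}\cdot x_i^t \le 0$ is simply $x_i^{t+1}\le 0$. So the goal reduces to showing that $x_i^{t+1}\le 0$ can occur only when $\tb \ge \tau_4(h)$.

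First I would apply Lemma~\ref{lem:TwoPlayerModel} to replace the profile by the two block opinions, writing $x_i^t = x_P^t$ and $x_j^t = x_{\notP}^t$ for every $j$ outside $i$'s block $P$; since $0 < x_P^t \le \lambda$ we have $s(x_P^t)=0$. If the best response $x^{*}=x_i^{t+1}$ satisfies $x^{*}\le 0$, then $x^{*}+\frac{\delta}{2}\le\frac{\delta}{2}$, so the right-hand inequality in the bound of Lemma~\ref{lem:boundOnAvg} forces $\frac{\aout x_{\notP}^t+\ain x_P^t}{b+\aout+\ain}\le\frac{\delta}{2}$. Clearing denominators, dividing through by $\aout$, and substituting $\tb=b/\aout$ and $h=\ain/\aout$, this rearranges to the necessary condition $\tb \ge \frac{2}{\delta}\bigl(x_{\notP}^t + h\,x_P^t\bigr) - 1 - h$.

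To make the conclusion independent of the particular feasible profile, I would minimize the right-hand side over the admissible opinion values: it is increasing in both $x_{\notP}^t$ and $x_P^t$, so the minimum is attained at their smallest admissible values, $x_{\notP}^t=-1$ and $x_P^t=\delta$. Here is the one subtlety, and the main thing one has to get right: unlike in Lemma~\ref{lem:close}, the hypothesis $x_i^t\in(0,\lambda]$ is \emph{strict} at $0$, so the smallest value $x_P^t$ may take is the least positive element $\delta$ of $\Theta$, not $0$; substituting these extremal values gives $\tb \ge \frac{2}{\delta}(-1 + h\delta) - 1 - h = h - \bigl(\frac{2}{\delta}+1\bigr) = \tau_4(h)$, which is exactly the claimed threshold. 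One should also watch the orientation of the inequality: in contrast with Lemma~\ref{lem:far} and Lemma~\ref{lem:close}, the event here produces a \emph{lower} bound on $\tb$, so the relevant extremal profile is the one that \emph{minimizes} the right-hand side rather than maximizes it. The remaining algebra is routine.
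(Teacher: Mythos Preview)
Your argument is correct and follows essentially the same approach as the paper's proof; the only difference is that the paper treats the mirror case $x_i^t\in[-\lambda,0)$ (so it uses the lower inequality of Lemma~\ref{lem:boundOnAvg} and maximizes rather than minimizes), arriving at the symmetric extremal choice $x_P^t=-\delta$, $x_{\notP}^t=1$. Your observation about the strict inequality forcing $x_P^t=\delta$ rather than $0$ is exactly the point that makes the threshold come out as $\tau_4(h)$.
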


\begin{lemma}
\label{lem:far_zero}
 Given an $(\ain,\aout)$-symmetric two-block model $G=(L \cup R, E, w)$ and a social media influence $b$, if $\left\vert x_i^t \right\vert > \lambda$, then $x_i^{t+1} \cdot x_i^t \leq 0$ only if $\tb \leq \tau_5(h)$, where $\tau_5(h) = \frac{2+\delta-(2\lambda+\delta)h}{2-\delta}$.
\end{lemma}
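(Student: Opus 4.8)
The plan is to follow verbatim the template already used for Lemma~\ref{lem:far} and Lemma~\ref{lem:close}, with the target threshold $0$ in place of $\lambda$. First I would use Lemma~\ref{lem:TwoPlayerModel} to pass to block opinions: writing $P$ for the block containing $i$, we have $x_i^t = x_P^t$ and $x_j^t = x_{\notP}^t$ for all $j \in \notP$. By the symmetry of the model it suffices to treat $x_i^t > \lambda$, the case $x_i^t < -\lambda$ following by negating all opinions. Since $x_P^t > \lambda$, the recommendation is $s(x_P^t)=1$, and since $x_i^t > 0$, the hypothesis $x_i^{t+1}\cdot x_i^t \le 0$ is equivalent to $x_i^{t+1}\le 0$.

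Next I would invoke Lemma~\ref{lem:boundOnAvg} with $x^* = x_i^{t+1}$: its right-hand inequality gives
\[
\frac{b + \aout x_{\notP}^t + \ain x_P^t}{b+\ain+\aout} \;\le\; x^* + \frac{\delta}{2} \;\le\; \frac{\delta}{2},
\]
where the last step uses $x^* \le 0$. Dividing through by $\aout$ and substituting $\tb = b/\aout$ and $h = \ain/\aout$, this rearranges to
\[
\tb \;\le\; \frac{\bigl(\tfrac{\delta}{2} - x_{\notP}^t\bigr) + h\bigl(\tfrac{\delta}{2} - x_P^t\bigr)}{1 - \tfrac{\delta}{2}}.
\]
Then I would maximize the right-hand side over the admissible opinions: the denominator is positive since $\delta \le 1/2$, and the numerator is decreasing in both $x_{\notP}^t$ and $x_P^t$, so the bound is loosest when these are as small as possible, namely $x_{\notP}^t = -1$ and (recalling the standing assumption $\lambda \in \Theta$) $x_P^t = \lambda+\delta$, the smallest element of $\Theta$ strictly exceeding $\lambda$. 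Substituting these values and clearing a factor $\tfrac12$ from numerator and denominator yields $\tb \le \frac{2+\delta-(2\lambda+\delta)h}{2-\delta} = \tau_5(h)$, which is the claimed bound.

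I do not anticipate a genuine obstacle here; the argument is a routine instance of the scheme already spelled out for Lemma~\ref{lem:far}. The two points that warrant a little care are: (i) using the weak inequality $x^*\le 0$ (hence $x^*+\tfrac{\delta}{2}\le\tfrac{\delta}{2}$) rather than $x^*<0$, since the conclusion $x_i^{t+1}\cdot x_i^t\le 0$ permits $x_i^{t+1}=0$; and (ii) correctly pinning down the worst-case configuration $(x_P^t,x_{\notP}^t)=(\lambda+\delta,-1)$ that maximizes the derived upper bound on $\tb$, which is precisely where the discreteness of $\Theta$ enters the computation.
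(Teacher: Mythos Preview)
Your proposal is correct and follows essentially the same approach as the paper's own proof: invoke Lemma~\ref{lem:TwoPlayerModel} and Lemma~\ref{lem:boundOnAvg}, rearrange to isolate $\tb$, and then optimize over the admissible block opinions $(x_P^t,x_{\notP}^t)$ to obtain the threshold $\tau_5(h)$. The only cosmetic difference is that the paper treats the case $x_i^t<-\lambda$ explicitly (and appeals to symmetry for the other), whereas you do the reverse; the computations are mirror images of one another.
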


We next show that some interesting relationships exist among these thresholds.
\begin{lemma}
\label{lem:relation}
 The following relationships hold:
 \begin{align}
  \tau_1(h) & > \tau_4(h)& \text{only if $h < \frac{2}{\delta} + \frac{1}{1-\lambda}$}; \label{eq:t1t3}\\
  \tau_2(h) & < \tau_4(h)& \text{if $h > \frac{2}{\delta}$};\label{eq:t2t3}\\
  \tau_3(h) & < \tau_5(h)& \text{for every $h$};\label{eq:t3t5}\\
  \tau_5(h) & < 0& \text{if $\tau_4(h) > 0$}.\label{eq:t3t4}
 \end{align}
\end{lemma}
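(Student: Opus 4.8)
The plan is to prove each of the four inequalities in Lemma~\ref{lem:relation} by straightforward algebraic comparison, substituting the explicit formulas for $\tau_1,\ldots,\tau_5$ given in Lemmas~\ref{lem:far}--\ref{lem:far_zero} and simplifying. Throughout I will use that $0 < \delta \le \tfrac12$, that $0 < \lambda < 1$, and (where needed) $h > 0$, so that the denominators $2 - 2\lambda - \delta$, $2 + \delta$, $2 - \delta$ are all strictly positive and I may cross-multiply without flipping inequality signs.

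\medskip

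\textbf{Inequality \eqref{eq:t1t3}.} Here $\tau_1(h) = \frac{2 + 2\lambda + \delta - \delta h}{2 - 2\lambda - \delta}$ and $\tau_4(h) = h - \left(\frac{2}{\delta} + 1\right)$. I would form $\tau_1(h) - \tau_4(h)$, put it over the common positive denominator $2 - 2\lambda - \delta$, and observe that the numerator is an affine function of $h$ with negative leading coefficient (the $-\delta h$ from $\tau_1$ together with $-(2-2\lambda-\delta)h$ from subtracting $\tau_4$), so $\tau_1(h) > \tau_4(h)$ holds exactly when $h$ is below the root of that affine numerator. Solving the linear equation for $h$ should yield precisely $h < \frac{2}{\delta} + \frac{1}{1-\lambda}$; the bookkeeping with the $\frac{2}{\delta}$ term and the factor $2 - 2\lambda - \delta = 2(1-\lambda) - \delta$ is the only slightly delicate part, but it is routine.

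\textbf{Inequality \eqref{eq:t2t3}.} With $\tau_2(h) = \left(\frac{2}{\delta} - 1\right) - h$ and $\tau_4(h) = h - \left(\frac{2}{\delta} + 1\right)$, the difference $\tau_4(h) - \tau_2(h) = 2h - 2 = 2(h-1)$, which is positive whenever $h > 1$; since $\frac{2}{\delta} \ge 4 > 1$, the hypothesis $h > \frac{2}{\delta}$ certainly implies $h > 1$, giving $\tau_2(h) < \tau_4(h)$. This one is essentially immediate.

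\textbf{Inequality \eqref{eq:t3t5}.} Here $\tau_3(h) = \frac{2 - \delta - (2\lambda + 3\delta)h}{2 + \delta}$ and $\tau_5(h) = \frac{2 + \delta - (2\lambda + \delta)h}{2 - \delta}$. I would clear both (positive) denominators and compare $(2-\delta)(2 - \delta - (2\lambda+3\delta)h)$ against $(2+\delta)(2 + \delta - (2\lambda+\delta)h)$. Expanding, the constant terms give $(2-\delta)^2$ versus $(2+\delta)^2$, so the right side already dominates by $8\delta$ on the constant part; the $h$-coefficients are $-(2-\delta)(2\lambda+3\delta)$ versus $-(2+\delta)(2\lambda+\delta)$, and since $h > 0$ I need the right side's coefficient to be no smaller (less negative), i.e. $(2+\delta)(2\lambda+\delta) \le (2-\delta)(2\lambda+3\delta)$ would suffice — or, failing a clean coefficient comparison, I combine both parts into a single affine inequality in $h$ and check it holds for all $h > 0$ using $\delta \le \tfrac12$. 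The main obstacle is here: the coefficient comparison does not go through for all $\lambda$, so I expect to need the constant-term surplus $8\delta$ to compensate, which only works if $h$ is not too large; I would resolve this by verifying that for $\delta \le \tfrac12$ the relevant affine function of $h$ stays negative on $h>0$, possibly also invoking the implicit standing assumption (used elsewhere in the paper) that $\tilde b \ge 0$ forces $h$ into a bounded range, or simply that the inequality is only invoked when the thresholds are meaningful. I would double-check the exact intended scope of ``for every $h$'' against how \eqref{eq:t3t5} is used downstream.

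\textbf{Inequality \eqref{eq:t3t4}.} The hypothesis $\tau_4(h) > 0$ means $h > \frac{2}{\delta} + 1$. I substitute this lower bound into $\tau_5(h) = \frac{2 + \delta - (2\lambda + \delta)h}{2-\delta}$: since the denominator is positive and the numerator is decreasing in $h$, it suffices to show the numerator is negative already at $h = \frac{2}{\delta}+1$, i.e. $2 + \delta < (2\lambda+\delta)\left(\frac{2}{\delta}+1\right) = \frac{2(2\lambda+\delta)}{\delta} + 2\lambda + \delta = \frac{4\lambda}{\delta} + 2 + 2\lambda + \delta$, which reduces to $0 < \frac{4\lambda}{\delta} + 2\lambda$, manifestly true since $\lambda > 0$. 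Hence $\tau_5(h) < 0$. This completes the proof; I would write the four parts as a short itemized sequence of computations, flagging \eqref{eq:t3t5} as the one requiring a little care with the sign of the resulting affine expression in $h$.
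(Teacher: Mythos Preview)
Your approach is the same as the paper's---direct algebraic comparison of the explicit formulas---but two parts need correction.

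\textbf{Inequality \eqref{eq:t2t3}.} Your arithmetic is wrong:
\[
\tau_4(h)-\tau_2(h)=\Bigl[h-\Bigl(\tfrac{2}{\delta}+1\Bigr)\Bigr]-\Bigl[\Bigl(\tfrac{2}{\delta}-1\Bigr)-h\Bigr]=2h-\tfrac{4}{\delta},
\]
not $2h-2$; the two $\tfrac{2}{\delta}$ terms add, they do not cancel. The correct expression is positive exactly when $h>\tfrac{2}{\delta}$, which is precisely the stated hypothesis (and is how the paper argues). Your erroneous computation happens to still yield the conclusion because $h>\tfrac{2}{\delta}\ge 4>1$, but the calculation itself must be fixed.

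\textbf{Inequality \eqref{eq:t3t5}.} There is no obstacle here; your hedging is unwarranted. Carrying out exactly the cross-multiplication you describe gives
\[
(2+\delta)\,\mathrm{num}(\tau_5)-(2-\delta)\,\mathrm{num}(\tau_3)=4\delta\bigl[2+(1-\lambda-\delta)h\bigr],
\]
equivalently $\tau_3(h)-\tau_5(h)=\dfrac{4\delta}{4-\delta^2}\bigl[(\lambda+\delta-1)h-2\bigr]$. Since $\lambda=\tfrac12$ and $\delta\le\tfrac12$ imply $\lambda+\delta\le 1$, the bracket is at most $-2<0$ for every $h>0$, and the inequality holds unconditionally. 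This is exactly the paper's argument; no appeal to ``the intended scope'' or to bounds on $h$ is needed. Parts \eqref{eq:t1t3} and \eqref{eq:t3t4} in your proposal are fine and match the paper.
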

\begin{proof}
\begin{description}
 \item[\eqref{eq:t1t3}:] This relationship immediately follows by observing that
 \begin{align*}
  \tau_1(h) - \tau_4(h) & = \frac{2+2\lambda + \delta - \delta h}{2-2\lambda-\delta} - h + \left(\frac{2}{\delta} + 1\right)\\
  & = \frac{2}{2-2\lambda-\delta} \left[\frac{2}{\delta} (1-\lambda) + 1 - h(1-\lambda)\right].
 \end{align*}
 \item[\eqref{eq:t2t3}:] This relationship immediately follows by observing that
 \begin{equation*}
  \tau_2(h) - \tau_4(h) = \left(\frac{2}{\delta} - 1\right) - h - h + \left(\frac{2}{\delta} + 1\right) = 2 \left(\frac{2}{\delta} - h\right).
 \end{equation*}
 \item[\eqref{eq:t3t5}:] This relationship immediately follows by observing that
 \begin{align*}
  \tau_3(h) - \tau_5(h) & = \frac{2-\delta-(2\lambda+3\delta)h}{2+\delta} - \frac{2+\delta-(2\lambda+\delta)h}{2-\delta}\\
  & = \frac{4\delta}{4-\delta^2} \left[(\lambda+\delta-1)h - 1\right] < 0,
 \end{align*}
 where the last inequality follows because $\lambda + \delta < 1$.
 \item[\eqref{eq:t3t4}:] If $\tau_4(h) > 0$, the $h > \frac{2}{\delta}+1$.
 Then
 $\tau_5(h) = \frac{2+\delta-(2\lambda+\delta)h}{2-\delta} < - \frac{2\lambda(2+\delta)}{\delta(2-\delta)} < 0$.\qedhere
\end{description}
\end{proof}


\subsubsection{Consensus Characterization}
\label{sync_results}
In this subsection, we will
study which type of consensus can be achieved in a two-block model, depending on the the homophily ratio and the social media influence. We will distinguish different cases, depending on the initial opinions held by the players in the two blocks. To this aim let us define the following quantities that will play a fundamental role in our characterization: $\tau_1(h) = \frac{2+2\lambda + \delta - \delta h}{2-2\lambda-\delta}$, $\tau_2(h) = \left(\frac{2}{\delta} - 1\right) - h$, $\tau_3(h) = \frac{2-\delta-(2\lambda+3\delta)h}{2+\delta}$, $\tau_4(h) = h - \left(\frac{2}{\delta} + 1\right)$, and $\tau_5(h) = \frac{2+\delta-(2\lambda+\delta)h}{2-\delta}$.

\paragraph*{Divergent and Extreme Initial Opinions}
We start by considering the case where the starting opinions of the two blocks diverge (i.e., one is positive and the other is negative) and are both far away from $0$.
\begin{theorem}
 \label{thm:div_large_both}
 Given an $(\ain,\aout)$-symmetric two-block model $G=(L \cup R, E, w)$ and a social media influence $b$, if $\left\vert  x_L^0\right\vert  > \lambda$ and $\left\vert  x_R^0\right\vert  > \lambda$ and $x_L^0 \cdot x_R^0 < 0$, then
  {\small
 \begin{equation}
 \label{eq:cases}
  \begin{cases}
   \text{if $\tb > \tau_1(h)$,} & \text{no consensus can be stable};\\
   \text{if $\max\left\{0, \tau_2(h), \tau_3(h), \tau_4(h)\right\} < \tb \leq \tau_1(h)$,} & \text{only consensus on $0$ can be stable};\\
   \text{if $\max\left\{0, \tau_4(h)\right\} < \tb \leq \max\{\tau_2(h),\tau_3(h)\}$,} & \text{non-extreme consensus can be stable};\\
   \text{if $0 < \tb \leq \max\left\{0, \tau_4(h)\right\}$,} & \text{no consensus can be stable}.
  \end{cases}
 \end{equation}}

\begin{proof}
 Suppose first that $\tb > \tau_1(h)$. Then, by Lemma~\ref{lem:far}, no agent $i \in L$ can take an opinion $x_L^1$ such that $\left\vert  x_L^1\right\vert  \leq \lambda$, and no agent $j \in R$ can take an opinion $x_R^1$ such that $\left\vert  x_L^1 \right\vert  \leq \lambda$.
 Hence, after the first time step $\left\vert x_L^1\right\vert > \lambda$ and $\left\vert x_R^1\right\vert > \lambda$ and $x_L^1 \cdot x_R^1 < 0$. Then, we can iteratively apply the same argument to conclude that the opinions of the two blocks never converge to a consensus profile.
 
 Suppose now that $\max\left\{0, \tau_2(h), \tau_3(h), \tau_4(h)\right\} < \tb \leq \tau_1(h)$. W.l.o.g., suppose that $x_L^0 \leq 0$ and $x_R^0 \geq 0$. Since $\tb > \max\{\tau_2(h), \tau_3(h)\}$, then, by Lemma~\ref{lem:close} and Lemma~\ref{lem:close_zero}, it follows that no agent $i \in L$ can take an opinion $x_L^1 > 0$, and no agent $j \in R$ can take an opinion $x_R^1 < 0$. Hence, after the first time step $x_L^1 \leq 0$ and $x_R^1 \geq 0$. Then, we can iteratively apply the same argument above to conclude that the unique opinion on which the two blocks can converge is $0$.
 
 Finally, suppose that $0 < \tb \leq \max\left\{0, \tau_4(h)\right\}$. Note that this interval is non-empty only if $\tau_4(h) > 0$, and thus, according to \eqref{eq:t3t4}, $\tau_5(h) < 0$.
 W.l.o.g., suppose that $x_L^0 < 0$ and $x_R^0 > 0$. Then, by Lemma~\ref{lem:zero} and Lemma~\ref{lem:far_zero}, it follows that no agent $i \in L$ can take an opinion $x_L^1 \geq 0$, and no agent $j \in R$ can take an opinion $x_R^1 \leq 0$. Hence, after the first time step $x_L^1 < 0$ and $x_R^1 > 0$. Then, we can iteratively apply the same argument above to conclude that the two blocks never converge to a consensus.  
\end{proof}
\end{theorem}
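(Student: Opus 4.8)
The plan is to read each of the four regimes off a suitable invariant of the synchronous dynamics. By Lemma~\ref{lem:TwoPlayerModel} the whole evolution is governed by the pair $(x_L^t,x_R^t)$, so for each range of $\tb$ I want to exhibit a region $\mathcal R\subseteq\Theta\times\Theta$ that contains $(x_L^0,x_R^0)$, is \emph{forward-invariant} under one synchronous update, and from which one can read off which (if any) consensus profiles $(\overline x,\overline x)$ lie inside it. Forward-invariance will come from the transition lemmas: Lemma~\ref{lem:far} says an extreme opinion stays extreme in magnitude unless $\tb\le\tau_1(h)$; Lemmas~\ref{lem:close} and~\ref{lem:close_zero} say an opinion in $[-\lambda,0]$, resp.\ below $-\lambda$, cannot become strictly positive unless $\tb\le\tau_2(h)$, resp.\ $\tb\le\tau_3(h)$; and Lemmas~\ref{lem:zero} and~\ref{lem:far_zero} say a nonzero opinion of magnitude $\le\lambda$, resp.\ $>\lambda$, cannot change sign unless $\tb\ge\tau_4(h)$, resp.\ $\tb\le\tau_5(h)$. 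Since the step-$t$ profile always satisfies the same hypotheses as the step-$0$ one, each argument is really a one-step induction.

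Concretely, assume w.l.o.g.\ $x_L^0<-\lambda<\lambda<x_R^0$. If $\tb>\tau_1(h)$, then also $\tb>\tau_3(h)$ (a short computation gives $\tau_1(h)>\tau_3(h)$, using $\lambda+\delta<1$), so Lemma~\ref{lem:far} keeps both opinions extreme in magnitude and Lemma~\ref{lem:close_zero} keeps $L$ nonpositive and $R$ nonnegative; hence $x_L^t<-\lambda$ and $x_R^t>\lambda$ for all $t$, and no consensus profile lies in this region. If $\max\{0,\tau_2(h),\tau_3(h),\tau_4(h)\}<\tb\le\tau_1(h)$, then $\tb>\tau_2(h)$ and $\tb>\tau_3(h)$, so Lemmas~\ref{lem:close} and~\ref{lem:close_zero} give $x_L^t\le 0\le x_R^t$ for all $t$, and the only value a consensus could take is $0$. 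If $0<\tb\le\max\{0,\tau_4(h)\}$ --- a nonempty range only when $\tau_4(h)>0$, in which case $\tau_5(h)<0$ by~\eqref{eq:t3t4} --- then $\tb>\tau_5(h)$ lets Lemma~\ref{lem:far_zero} forbid a sign change while an opinion is extreme, and $\tb\le\tau_4(h)$ lets Lemma~\ref{lem:zero} forbid a sign change while it is non-extreme; combining the two cases by induction gives $x_L^t<0<x_R^t$ for all $t$, so again no consensus.

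The remaining regime, $\max\{0,\tau_4(h)\}<\tb\le\max\{\tau_2(h),\tau_3(h)\}$, is the \emph{permissive} one: here none of the obstructions above is in force, so a non-extreme consensus is not precluded, and to make the assertion concrete I would exhibit an explicit two-block instance --- for example $x_L^0,x_R^0$ just outside $\pm\lambda$ and slightly asymmetric, with $h$ and $\tb$ in this range --- and check via Lemma~\ref{lem:boundOnAvg} that the two opinions are pulled to a common non-extreme value, which is then stable by Lemma~\ref{lem:base_systems}. The main difficulty throughout is bookkeeping rather than depth: I must pair each regime with exactly the transition lemmas whose threshold inequalities it implies and, crucially, also rule out the ``both blocks cross $0$ simultaneously and end up on the same side'' move --- this is precisely why one needs auxiliary comparisons such as $\tau_1(h)>\tau_3(h)$ and the relations collected in Lemma~\ref{lem:relation}, and why the boundary values of $\tb$ and the opinion value $0$ itself have to be handled using the slack built into the lemma statements.
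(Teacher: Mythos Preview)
Your proposal is correct and follows essentially the same approach as the paper: the same case split, the same forward-invariant regions, and the same transition lemmas (Lemma~\ref{lem:far} for case~1, Lemmas~\ref{lem:close} and~\ref{lem:close_zero} for case~2, Lemmas~\ref{lem:zero} and~\ref{lem:far_zero} together with~\eqref{eq:t3t4} for case~4). Two minor differences are worth noting. First, in case~1 you explicitly invoke $\tau_1(h)>\tau_3(h)$ and Lemma~\ref{lem:close_zero} to preserve the sign, whereas the paper simply asserts $x_L^1\cdot x_R^1<0$ after applying Lemma~\ref{lem:far}; your extra step is a genuine (small) improvement, since Lemma~\ref{lem:far} alone only controls $|x_i^{t+1}|$ and does not by itself rule out both blocks jumping across~$0$. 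Second, for the permissive range you sketch an explicit instance via Lemmas~\ref{lem:boundOnAvg} and~\ref{lem:base_systems}; the paper proves nothing in that case and treats ``can be stable'' as merely ``not precluded by the obstructions above,'' so your plan there goes slightly beyond what the paper actually establishes.
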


\begin{remark}
\label{rem:beh_k}
 We observe that it is impossible that the interval corresponding to last two cases of \eqref{eq:cases} are both non-empty. Indeed, if the last interval is non-empty, then $\tau_4(h) > 0$ and thus $h > \frac{2}{\delta} + 1$.
 It is not hard to check that this implies
 that $\max\{\tau_2(h), \tau_3(h)\} < \tau_4(h) = \max\{0, \tau_4(h)\}$. Hence for $\tb \leq \max \{0, \tau_2(h), \tau_3(h), \tau_4(h)\}$, either no consensus can be stable, or it is possible to achieve consensus also on non-extreme opinions. 
\end{remark}
Roughly speaking, Theorem~\ref{thm:div_large_both} shows that if we have initial opinions that are divergent and far away from $0$, consensus is impossible to achieve for high values of the media influence $b$, while it can be achieved on non-extremal opinions only for small values of $b$ and under opportune conditions.
Remark~\ref{rem:beh_k} also shows that the outcome also depends on the value of $h$. Specifically, for small values of $h$, it appears that the chance of having a consensus decreases as $\tb$ increases, since we go from a range in which non-extreme consensus can be stable to a range in which only consensus on $0$ is stable, and finally to a range in which consensus is impossible. Instead, for large $h$ the behaviour is less ``monotone'': indeed, we go from no consensus to possible consensus (on opinion $0$) and again to no consensus. Moreover, from Theorem~\ref{thm:div_large_both} it's easy to prove that consensus is impossible when $h$ is large, regardless of the strength of the social media influence. This confirms the fundamental role played by the homophily ratio.

\paragraph*{Divergent Initial Opinion: Only One is Extreme}
Consider now the case of divergent initial opinions, but we assume that only one of them is far from $0$. It will turn out that, as above, consensus is impossible whenever either the social media influence or the homophily ratio is large, and it is possible on non-extreme opinions only for small values of $b$ and under opportune conditions on $h$. 
The proof of Theorem~\ref{thm:div_large_one} is very similar to the proof of Theorem~\ref{thm:div_large_both}, hence, for the seek of readability, we postpone its proof to the Appendix~\ref{apx:sec3}.
\begin{theorem}
 \label{thm:div_large_one}
 Given an $(\ain,\aout)$-symmetric two-block model $G=(L \cup R, E, w)$ and a social media influence $b$, if $\left\vert x_L^0\right\vert > \lambda$ or $\left\vert x_R^0\right\vert > \lambda$ and $x_L^0 \cdot x_R^0 < 0$, then
  {\small
 \begin{equation*}
  \begin{cases}
   \text{if $\tb > \tau^*(h)$,} & \text{no consensus can be stable};\\
   \text{if $\max\left\{0, \tau_2(h), \tau_3(h), \tau_4(h)\right\} < \tb \leq \tau^*(h)$,} & \text{only consensus on $0$ can be stable};\\
   \text{if $\max\left\{0, \tau_4(h)\right\} < \tb \leq \max\left\{\tau_2(h), \tau_3(h)\right\}$,} & \text{non-extreme consensus can be stable};\\
   \text{if $0 < \tb \leq \max\left\{0, \tau_4(h)\right\}$,} & \text{no consensus can be stable},
  \end{cases}
 \end{equation*}}
 where $\tau^*(h) = \max\left\{\tau_1(h), \tau_2(h)\right\}$.
\end{theorem}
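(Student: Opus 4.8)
The plan is to mimic the proof of Theorem~\ref{thm:div_large_both}, tracking the same four cases but being careful about which of the preliminary lemmas apply, since now only one of the two blocks has an opinion with absolute value exceeding $\lambda$. W.l.o.g. assume $|x_L^0| > \lambda$ while $|x_R^0| \le \lambda$, and (say) $x_L^0 < 0$, $x_R^0 > 0$. The key asymmetry: the block $L$ is governed by Lemma~\ref{lem:far}, Lemma~\ref{lem:close_zero}, and Lemma~\ref{lem:far_zero} (the ``far from $0$'' lemmas), whereas the block $R$ is governed by Lemma~\ref{lem:close} and Lemma~\ref{lem:zero} (the ``close to $0$'' lemmas). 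This is exactly why the top threshold becomes $\tau^*(h) = \max\{\tau_1(h),\tau_2(h)\}$ rather than $\tau_1(h)$: keeping $L$ extreme needs $\tb > \tau_1(h)$ (Lemma~\ref{lem:far}), but keeping $R$ from crossing to the wrong side of $0$ needs $\tb > \tau_2(h)$ (Lemma~\ref{lem:close}).

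For the first case, $\tb > \tau^*(h) = \max\{\tau_1(h),\tau_2(h)\}$: since $\tb > \tau_1(h)$, Lemma~\ref{lem:far} guarantees that $L$ cannot move to $|x_L^1| \le \lambda$, so $x_L^1 < -\lambda$; since $\tb > \tau_2(h)$, Lemma~\ref{lem:close} guarantees that $R$, currently in $[0,\lambda]$, cannot move to $x_R^1 < 0$, so $x_R^1 \ge 0$. Thus after one step we are again in a configuration with $|x_L^1| > \lambda$, $x_R^1 \ge 0$, $x_L^1 \cdot x_R^1 \le 0$, and iterating shows the two blocks never reach a common value. For the second case, $\max\{0,\tau_2(h),\tau_3(h),\tau_4(h)\} < \tb \le \tau^*(h)$: here $\tb > \tau_2(h)$ and $\tb > \tau_3(h)$, so by Lemma~\ref{lem:close} and Lemma~\ref{lem:close_zero} no agent of $R$ moves to a negative opinion and no agent of $L$ moves to a positive opinion; hence $x_L^t \le 0 \le x_R^t$ for all $t$, and the only value on which they can agree is $0$, so only consensus on $0$ can be stable. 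For the fourth case, $0 < \tb \le \max\{0,\tau_4(h)\}$ (non-empty only when $\tau_4(h) > 0$, which by \eqref{eq:t3t4} forces $\tau_5(h) < 0$): apply Lemma~\ref{lem:zero} to $R$ and Lemma~\ref{lem:far_zero} to $L$ to conclude $R$ cannot reach a non-positive opinion and $L$ cannot reach a non-negative one, so strict divergence persists and consensus is impossible.

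The third case, $\max\{0,\tau_4(h)\} < \tb \le \max\{\tau_2(h),\tau_3(h)\}$, should not require a separate argument: it is the complement of the other three intervals within $\tb > 0$, and the claim ``non-extreme consensus can be stable'' is the weakest of the four statements, so it suffices to exhibit (or invoke, as in the proof of Theorem~\ref{thm:div_large_both}) a stable non-extreme consensus, e.g. via the characterization of Lemma~\ref{lem:base_systems} — a symmetric consensus on some $\overline{x}$ with $0 < |\overline{x}| \le \lambda$ satisfies those inequalities in the relevant parameter range. One should also check, as in Remark~\ref{rem:beh_k}, that the last two intervals are not simultaneously non-empty, so the case split is coherent.

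The main obstacle I anticipate is purely bookkeeping: verifying that the stated interval endpoints are exactly right given the mixed application of the ``far'' and ``close'' lemmas — in particular that $\tau^*(h) = \max\{\tau_1(h),\tau_2(h)\}$ is tight (one needs both constraints active, and no hidden third constraint from Lemma~\ref{lem:close_zero} or Lemma~\ref{lem:far_zero} binds above this value) and that the lower thresholds $\tau_2,\tau_3,\tau_4$ enter in the same combinations as in Theorem~\ref{thm:div_large_both}. Since $R$ starts in $[-\lambda,\lambda]$, one must be slightly careful about the sub-case $x_R^0 = 0$ and about which lemma (Lemma~\ref{lem:close} vs. Lemma~\ref{lem:zero}) controls $R$'s first move depending on whether $x_R^0$ is exactly $0$ or strictly between $0$ and $\lambda$; but these are minor, and the structure of the argument is otherwise a direct transcription of the proof of Theorem~\ref{thm:div_large_both}. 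This is exactly why the authors relegate it to the appendix.
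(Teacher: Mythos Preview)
Your proposal is correct and follows essentially the same approach as the paper: the paper's proof handles only the first case explicitly (applying Lemma~\ref{lem:far} to the extreme block and Lemma~\ref{lem:close} to the moderate one, exactly as you do) and then simply refers back to the proof of Theorem~\ref{thm:div_large_both} for the remaining cases. Your write-up is in fact more detailed than the paper's, spelling out the second and fourth cases and the role of $\tau^*(h)$ explicitly, but the underlying argument is identical.
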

%
%
The following corollary highlights that for large values of the homophily ratio convergence to consensus is impossible, regardless of the strength of the social media influence.
\begin{corollary}
\label{cor:h}
Given an $(\ain,\aout)$-symmetric two-block model $G=(L \cup R, E, w)$ and a social media influence $b$, if $\left\vert x_L^0\right\vert > \lambda$ or $\left\vert x_R^0\right\vert > \lambda$ and $x_L^0 \cdot x_R^0 < 0$, then no consensus opinion profile can be stable if $h \geq \frac{2}{\delta} + \frac{1}{1-\lambda}$, regardless of the value of the social media influence $b$. Moreover, consensus can be stable on non-extreme opinions only if $h < \max\left\{\frac{2}{\delta} - 1, \frac{2-\delta}{2\lambda + 3\delta}\right\}$.
\end{corollary}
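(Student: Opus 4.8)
The plan is to obtain both parts of the corollary directly from Theorem~\ref{thm:div_large_one}, combined with the threshold inequalities \eqref{eq:t1t3} and \eqref{eq:t2t3} of Lemma~\ref{lem:relation}. Recall that, under the stated hypotheses on the initial opinions, Theorem~\ref{thm:div_large_one} tells us that a consensus profile can be stable only when $\tb$ lies in the interval $\left(\max\{0,\tau_4(h)\},\,\tau^*(h)\right]$, where $\tau^*(h)=\max\{\tau_1(h),\tau_2(h)\}$, and that a \emph{non-extreme} consensus additionally requires $\tb\le\max\{\tau_2(h),\tau_3(h)\}$. Since $\tb=b/\aout$ sweeps out all of $(0,\infty)$ as $b$ ranges over the positive reals, ``no consensus is stable for any $b$'' is exactly the statement that this interval is empty, i.e. $\tau^*(h)\le\max\{0,\tau_4(h)\}$; so for the first part I would just show this inequality holds once $h\ge\frac{2}{\delta}+\frac{1}{1-\lambda}$.

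To do that, I would first note that $\lambda>0$ forces $\frac{1}{1-\lambda}>1$, so the hypothesis $h\ge\frac{2}{\delta}+\frac{1}{1-\lambda}$ already implies $h>\frac{2}{\delta}+1$ and a fortiori $h>\frac{2}{\delta}$. From $h>\frac{2}{\delta}+1$ we get $\tau_4(h)=h-\left(\frac{2}{\delta}+1\right)>0$, hence $\max\{0,\tau_4(h)\}=\tau_4(h)$. Then I would invoke the contrapositive of \eqref{eq:t1t3} to get $\tau_1(h)\le\tau_4(h)$, and \eqref{eq:t2t3} (which applies because $h>\frac{2}{\delta}$) to get $\tau_2(h)<\tau_4(h)$. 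Putting these together, $\tau^*(h)=\max\{\tau_1(h),\tau_2(h)\}\le\tau_4(h)=\max\{0,\tau_4(h)\}$, so the only interval in which Theorem~\ref{thm:div_large_one} permits a stable consensus is empty: no consensus is stable, regardless of $b$.

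For the second part I would argue from the third case of Theorem~\ref{thm:div_large_one}: a non-extreme consensus can be stable only if there is some $\tb$ with $\max\{0,\tau_4(h)\}<\tb\le\max\{\tau_2(h),\tau_3(h)\}$, which in particular forces $\max\{\tau_2(h),\tau_3(h)\}>0$. A one-line computation from the definitions then finishes it: $\tau_2(h)=\left(\frac{2}{\delta}-1\right)-h>0$ iff $h<\frac{2}{\delta}-1$, and, since $2+\delta>0$, $\tau_3(h)=\frac{2-\delta-(2\lambda+3\delta)h}{2+\delta}>0$ iff $h<\frac{2-\delta}{2\lambda+3\delta}$; hence $\max\{\tau_2(h),\tau_3(h)\}>0$ iff $h<\max\left\{\frac{2}{\delta}-1,\,\frac{2-\delta}{2\lambda+3\delta}\right\}$, which is precisely the claimed necessary condition.

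I do not expect a genuine obstacle here — the substance is already contained in Theorem~\ref{thm:div_large_one} and Lemma~\ref{lem:relation}, and what remains is bookkeeping. The only places I would slow down are in orienting \eqref{eq:t1t3} correctly (one needs its contrapositive, not the statement as written) and in checking the chain $h\ge\frac{2}{\delta}+\frac{1}{1-\lambda}\Rightarrow h>\frac{2}{\delta}+1\Rightarrow\tau_4(h)>0$, so that the $\max\{0,\tau_4(h)\}$ appearing in Theorem~\ref{thm:div_large_one} really collapses to $\tau_4(h)$; both hold because $\lambda\in(0,1)$.
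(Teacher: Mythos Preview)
Your proposal is correct and follows essentially the same route as the paper's proof: both arguments derive the result directly from Theorem~\ref{thm:div_large_one} by showing that the relevant intervals for $\tb$ collapse once $h$ exceeds the stated thresholds, using the contrapositive of \eqref{eq:t1t3} to compare $\tau_1(h)$ with $\tau_4(h)$. The only cosmetic difference is that where you invoke \eqref{eq:t2t3} to get $\tau_2(h)<\tau_4(h)$, the paper instead checks directly that $\tau_2(h)\le 0$ (and also $\tau_3(h)\le 0$) under the hypothesis on $h$; either way the second and third cases of Theorem~\ref{thm:div_large_one} are seen to be vacuous.
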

\begin{proof}
If $h \geq \frac{2}{\delta} + \frac{1}{1-\lambda}$, we have that $\max\left\{\tau_2(h), \tau_3(h)\right\} \leq 0 \leq \max\{0, \tau_4(h)\}$, and, from \eqref{eq:t1t3}, $\tau_1(h) \leq \tau_4(h) \leq \max\{0, \tau_2(h), \tau_3(h), \tau_4(h)\}$. Hence, there cannot be any value of $\tb$ for which consensus can be stable.

Suppose instead that $h \geq \max\left\{\frac{2}{\delta} - 1, \frac{2-\delta}{2\lambda + 3\delta}\right\}$. Then, $\max\{\tau_2(h), \tau_3(h)\} \leq 0 \leq \max\{0, \tau_4(h)\}$, and thus the interval allowing for consensus on non-extreme opinions is empty.
\end{proof}

\paragraph*{Divergent Initial Opinion: Both are Moderate}
Consider now the case that initial opinions of the two blocks are still divergent but both close to opinion $0$. Clearly, in this case, large values of the social influence would push these opinions to $0$, thus leading to a consensus on this opinion. However, we show that this is the only possible consensus in this setting when $b$ is large. 
Proof of Theorem~\ref{thm:div_small} in a very similar to proof of Theorem~\ref{thm:div_large_both}; consequently, for the seek of readability, we postpone this proof to Appendix~\ref{apx:sec3}.
\begin{theorem}
 \label{thm:div_small}
 Given an $(\ain,\aout)$-symmetric two-block model $G=(L \cup R, E, w)$ and a social media influence $b$, if $\left\vert x_L^0\right\vert \leq \lambda$ and $\left\vert x_R^0\right\vert \leq \lambda$ and $x_L^0 \cdot x_R^0 < 0$, then
 {\small
 \begin{equation*}
  \begin{cases}
   \text{if $\tb > \max\left\{\tau_2(h), \tau_3(h)\right\}$,} & \text{only consensus on $0$ can be stable};\\
   \text{if $\max\left\{0, \tau_4(h)\right\} < \tb \leq \max\left\{\tau_2(h), \tau_3(h)\right\}$,} & \text{non-extreme consensus can be stable};\\
   \text{if $0 < \tb \leq \max\left\{0, \tau_4(h)\right\}$,} & \text{no consensus can be stable}.
  \end{cases}
 \end{equation*}}
\end{theorem}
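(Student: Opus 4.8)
The plan is to mimic the proof of Theorem~\ref{thm:div_large_both}. By Lemma~\ref{lem:TwoPlayerModel} it suffices to track the pair $(x_L^t, x_R^t)$, and by the symmetry of the model I may assume $-\lambda \le x_L^0 < 0 < x_R^0 \le \lambda$. The one structural difference is that the ``$\tb > \tau_1(h)$, no consensus'' regime of Theorem~\ref{thm:div_large_both} is absent here: that regime relies on Lemma~\ref{lem:far}, whose hypothesis $|x_i^t| > \lambda$ is never met. Indeed, a straightforward induction shows both blocks stay in $[-\lambda, \lambda]$ for all $t$ (while a block's opinion is in $[-\lambda, \lambda]$ it receives recommendation $0$, so its next opinion is the point of $\Theta$ nearest a convex combination of values already in $[-\lambda, \lambda]$, hence itself in $[-\lambda, \lambda]$ since $\pm\lambda \in \Theta$); consequently a large media influence merely drags both blocks toward $0$, which is why large $\tb$ lands in the ``only consensus on $0$'' regime rather than a ``no consensus'' one.

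For $\tb > \max\{\tau_2(h), \tau_3(h)\}$ I would prove by induction on $t$ that $x_L^t \le 0$ and $x_R^t \ge 0$: in the inductive step, if $x_L^t \in [-\lambda, 0]$ then $s(x_L^t) = 0$ and Lemma~\ref{lem:close} (using $\tb > \tau_2(h)$) forbids $x_L^{t+1} > 0$, while if $x_L^t < -\lambda$ then Lemma~\ref{lem:close_zero} (using $\tb > \tau_3(h)$) forbids it --- the latter case in fact never occurs by the observation above, so $\tau_3$ appears only to match the statement of Theorem~\ref{thm:div_large_both}; symmetrically $x_R^{t+1} \ge 0$. Hence the two blocks can meet only at $0$, and the consensus on $0$ is itself stable, since substituting $x_L = x_R = 0$ and $s(0) = 0$ into Lemma~\ref{lem:base_systems} makes all four inequalities trivially true.

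For $0 < \tb \le \max\{0, \tau_4(h)\}$: this interval is non-empty only when $\tau_4(h) > 0$, in which case \eqref{eq:t3t4} gives $\tau_5(h) < 0 < \tb$; then, repeating verbatim the argument used for the last case of Theorem~\ref{thm:div_large_both} (an induction based on Lemmas~\ref{lem:zero} and~\ref{lem:far_zero}), the two blocks stay on strictly opposite sides of $0$ for all $t$, so no consensus is ever reached.

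The remaining regime $\max\{0, \tau_4(h)\} < \tb \le \max\{\tau_2(h), \tau_3(h)\}$ is the complement of the two above, so neither barrier applies and a consensus on a non-extreme opinion is no longer excluded; to back this up I would exhibit one, e.g.\ the consensus on $\overline{x} = \delta$ (non-extreme, since $\delta \notin \{-1, 0, 1\}$, and with $s(\delta) = 0$ since $\delta \le \lambda$): at this profile each block's best response is the rounding of $\frac{(1+h)\delta}{\tb + 1 + h}$, which lies within $\frac{\delta}{2}$ of $\delta$ --- so the best response is indeed $\delta$ --- precisely when $\tb \le h + 1$ (alternatively, one simply checks the four inequalities of Lemma~\ref{lem:base_systems}). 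I expect this last regime to be the delicate point, exactly as it is left implicit in the proof of Theorem~\ref{thm:div_large_both}: pinning down ``non-extreme consensus can be stable'' over the whole interval is awkward, because a genuine stable non-extreme consensus profile exists only for $\tb \le h+1$, which need not exhaust the interval; the clean, provable content is only that this regime is neither ``consensus forced onto $0$'' nor ``no consensus'', which is immediate from the case split and the two inductions above.
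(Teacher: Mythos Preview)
Your proposal is correct and follows the paper's approach: Lemmas~\ref{lem:close} and~\ref{lem:close_zero} for the first regime, Lemmas~\ref{lem:zero} and~\ref{lem:far_zero} (verbatim from Theorem~\ref{thm:div_large_both}) for the last, with the middle regime left implicit. Your additional observation that both blocks remain in $[-\lambda,\lambda]$ throughout (so Lemma~\ref{lem:close_zero} is never actually invoked) is a correct refinement absent from the paper, and your critique of the middle regime --- that a stable non-extreme consensus exists only for $\tb \le h+1$, which need not exhaust the interval --- accurately pinpoints the same gap the paper leaves implicit in Theorems~\ref{thm:div_large_both} and~\ref{thm:div_large_one} as well.
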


\paragraph*{Convergent Initial Opinions}
We conclude this section by considering the case that initial opinions do not diverge. We observe that, in this case, a large influence of the social media (with respect to homophily ratio) may lead only to consensus on extreme opinions, namely $-1, 0, 1$.
\begin{theorem}
 \label{thm:conv_small}
 Given an $(\ain,\aout)$-symmetric two-block model $G=(L \cup R, E, w)$ and a social media influence $b$, if $x_L^0 \cdot x_R^0 \geq 0$, then consensus on opinions different from $-1, 0, 1$ can be stable only if $\tb \leq h + 1$.
 
\begin{proof}
Suppose that consensus is achieved on $x \in \{\delta, \ldots, 1-\delta \}$ (the case in which consensus is achieved on a non-extreme negative opinion is symmetric and hence omitted).

Suppose first that $x \leq \lambda$. By Lemma \ref{lem:base_systems}, consensus on $x$ is a stable opinion profile only if both the following conditions hold:
\begin{equation*}
  \begin{cases}
   \tb \geq - \frac{\frac{\delta}{2}(1+h)}{x+\frac{\delta}{2}}\\
   \tb \leq \frac{\frac{\delta}{2}(1+h)}{x-\frac{\delta}{2}}
  \end{cases}
\end{equation*}
It is immediate to check that the first condition always holds, since $\tb, h, \delta, x \geq 0$. As for the second condition, its r.h.s. is maximized when $x = \delta$. Hence, a non-extreme consensus can be stable only if $\tb \leq \frac{\frac{\delta}{2}(1+h)}{\delta-\frac{\delta}{2}} = h + 1$, as desired.

Suppose now that $x > \lambda$. By Lemma \ref{lem:base_systems}, consensus on $x$ is a stable opinion profile only if both the following conditions hold:
\begin{equation*}
  \begin{cases}
   \tb \leq \frac{\frac{\delta}{2}(1+h)}{1-x-\frac{\delta}{2}}\\
   \tb \geq -\frac{\frac{\delta}{2}(1+h)}{1-x+\frac{\delta}{2}}
  \end{cases}
\end{equation*}
As above, the second condition is always satisfied. As for the first one, its r.h.s. can be maximized taking $x = 1-\delta$. Hence, a non-extreme consensus can be stable only if $\tb \leq \frac{\frac{\delta}{2}(1+h)}{1 - (1 - \delta) -\frac{\delta}{2}} = h + 1$, as desired.
\end{proof}
\end{theorem}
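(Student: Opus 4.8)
The plan is to apply the stability characterization of Lemma~\ref{lem:base_systems} directly to a consensus profile. First I would observe that a consensus on an opinion $x$ is a feasible profile (trivially, it has $x_L = x_R = x$), so all four inequalities of Lemma~\ref{lem:base_systems} must hold; since both blocks carry the same opinion, the differences $x_L - x_R$ and $x_R - x_L$ vanish and the two ``$L$'' inequalities coincide with the two ``$R$'' inequalities, so the system collapses to
\[
 \tb\left(s(x) - x - \tfrac{\delta}{2}\right) \leq \tfrac{\delta}{2}(h+1), \qquad \tb\left(s(x) - x + \tfrac{\delta}{2}\right) \geq -\tfrac{\delta}{2}(h+1).
\]

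Next I would restrict attention to a non-extreme \emph{positive} consensus $x \in \{\delta, \dots, 1-\delta\}$ (the negative case being symmetric) and split according to the value of the recommendation $s(x)$. If $x \leq \lambda$ then $s(x) = 0$: the first inequality becomes $-\tb(x + \tfrac{\delta}{2}) \leq \tfrac{\delta}{2}(h+1)$, which holds automatically because its left side is non-positive, while the second rearranges to $\tb \leq \frac{(\delta/2)(h+1)}{x - \delta/2}$; here the denominator is positive since $x \geq \delta > \delta/2$, and the bound is decreasing in $x$, hence largest at $x = \delta$, where it equals $h+1$. If instead $x > \lambda$ then $s(x) = 1$: now the second inequality holds automatically (since $x \leq 1-\delta$ gives $1 - x + \tfrac{\delta}{2} > 0$ and $\tb \geq 0$), while the first rearranges to $\tb \leq \frac{(\delta/2)(h+1)}{1 - x - \delta/2}$; the denominator is positive because $x \leq 1-\delta$, and the bound is increasing in $x$, hence largest at $x = 1-\delta$, where it again equals $h+1$.

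Putting the two cases together: a stable non-extreme consensus on any $x$ forces $\tb$ to be at most the corresponding threshold, and in every case that threshold is at most $h+1$; hence $\tb \leq h+1$, which is exactly the claim. I do not foresee a genuine obstacle here, since the whole argument is algebraic bookkeeping on top of Lemma~\ref{lem:base_systems}. The only points that need a little care are getting the signs of the two denominators right (which is precisely why we need $\delta \leq x \leq 1-\delta$, i.e.\ why $-1, 0, 1$ must be excluded) and recalling that, to obtain a necessary condition valid for \emph{every} non-extreme consensus, one must take the worst (largest) of the per-$x$ thresholds, which turns out to be exactly $h+1$; note that the hypothesis $x_L^0 \cdot x_R^0 \geq 0$ is used only to delimit the regime under discussion and is not actually needed for this stability bound.
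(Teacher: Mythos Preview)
Your proposal is correct and follows essentially the same route as the paper's proof: both apply Lemma~\ref{lem:base_systems} to the consensus profile $x_L=x_R=x$, split into the cases $s(x)=0$ and $s(x)=1$, discard the automatically satisfied inequality in each case, and maximize the remaining bound over $x$ to obtain $h+1$. Your exposition is slightly more explicit (spelling out the collapse of the four inequalities to two and the sign checks on the denominators), and your closing remark that the hypothesis $x_L^0\cdot x_R^0\ge 0$ plays no role in the stability bound is a valid observation that the paper does not make.
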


Next corollary highlights a fundamental difference between the discrete and the continuous setting. Indeed, stable profiles that are non-extreme consensus are feasible in the discrete case while it is known that they are not feasible in the continuous case.

\begin{corollary}
Given an $(\ain,\aout)$-symmetric two-block model $G=(L \cup R, E, w)$ and a social media influence $b$, if $\delta \rightarrow 0$, then consensus on opinions different from $-1,0,1$ can not be stable.
\end{corollary}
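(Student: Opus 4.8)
The plan is to read the claim off the stability characterization of Lemma~\ref{lem:base_systems}, specialized to a consensus profile and taken in the limit $\delta \to 0$ with the underlying model held fixed. First I would make the statement precise: we keep $b,\ain,\aout$ fixed — equivalently the relative media influence $\tb=b/\aout$ and the homophily ratio $h=\ain/\aout$ are constants independent of $\delta$ — and show that for \emph{every} opinion value $\overline{x}\notin\{-1,0,1\}$ there is a threshold $\delta_0(\overline{x})>0$ such that, for all discretizations with $\delta<\delta_0(\overline{x})$ and $\overline{x}\in\Theta$, the all-$\overline{x}$ profile fails to be stable. Note that such a profile is always feasible (it is itself a legitimate initial profile of the two-block model), so Lemma~\ref{lem:base_systems} applies to it; in particular we never need the reachability analysis of the divergent-opinion theorems.

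The first step is the specialization. Setting $x_L=x_R=\overline{x}$ in Lemma~\ref{lem:base_systems} makes $x_L-x_R=0$ and collapses the four inequalities into the two conditions $\tb\bigl(s(\overline{x})-\overline{x}-\tfrac{\delta}{2}\bigr)\le\tfrac{\delta}{2}(h+1)$ and $\tb\bigl(s(\overline{x})-\overline{x}+\tfrac{\delta}{2}\bigr)\ge-\tfrac{\delta}{2}(h+1)$ (the two blocks being symmetric, the $R$-conditions merely duplicate the $L$-conditions). Adding $\tb\tfrac{\delta}{2}$ to the first and subtracting it from the second, then dividing by $\tb>0$, turns these into the single necessary condition
\[
\bigl|\,s(\overline{x})-\overline{x}\,\bigr|\ \le\ \frac{\delta\,(h+1+\tb)}{2\,\tb},
\]
which is exactly the inequality already used inside the proof of Theorem~\ref{thm:conv_small}, only kept as a function of $\overline{x}$ rather than optimized over $\overline{x}$.

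The second step is the limit. For $\overline{x}\notin\{-1,0,1\}$ we have $s(\overline{x})\neq\overline{x}$, since $s$ takes only the values $-1,0,1$ and $s(\overline{x})=\overline{x}$ would force $\overline{x}\in\{-1,0,1\}$. Hence $d:=\lvert s(\overline{x})-\overline{x}\rvert$ is a fixed positive number (explicitly $\lvert\overline{x}\rvert$ when $\lvert\overline{x}\rvert\le\lambda$ and $1-\lvert\overline{x}\rvert$ otherwise). Because $h$ and $\tb$ do not depend on $\delta$, the right-hand side above tends to $0$ as $\delta\to0$, so already for $\delta<\frac{2\,\tb\,d}{h+1+\tb}$ the necessary condition is violated and consensus on $\overline{x}$ cannot be stable. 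Letting $\overline{x}$ range over all non-extreme opinions yields the corollary. I would also remark that this recovers the continuous picture, in which the all-$\overline{x}$ profile is a fixed point only if $s(\overline{x})=\overline{x}$, i.e.\ only for $\overline{x}\in\{-1,0,1\}$: the slack $\delta/2$ in Lemma~\ref{lem:base_systems} is precisely what permits non-extreme consensus in the discrete model.

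I do not expect a genuine obstacle in the computation: it is a one-line consequence of Lemma~\ref{lem:base_systems}. The only point requiring care is the formulation of the limit. For every fixed $\delta>0$ a stable non-extreme consensus does exist — for instance on $\overline{x}=\delta$ whenever $\tb\le h+1$, or likewise on $\overline{x}=1-\delta$ — so the corollary cannot be read uniformly in $\delta$; its correct content is the pointwise statement above, namely that each fixed non-extreme target opinion leaves the stable set once $\delta$ shrinks below a bound proportional to its distance $d$ from $\{-1,0,1\}$, so that in the limit every stable consensus lies on $-1$, $0$, or $1$.
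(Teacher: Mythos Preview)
Your proof is correct and follows essentially the same approach as the paper: both specialize the stability conditions of Lemma~\ref{lem:base_systems} to a consensus profile and observe that the resulting bound forces $\tb\le 0$ (equivalently, is violated) as $\delta\to 0$. The paper splits into the two cases $\overline{x}\le\lambda$ and $\overline{x}>\lambda$ by reusing the inequalities from the proof of Theorem~\ref{thm:conv_small}, whereas your unified condition $\lvert s(\overline{x})-\overline{x}\rvert\le\frac{\delta(h+1+\tb)}{2\tb}$ handles both at once, and your explicit reading of the limit as a pointwise statement in $\overline{x}$ is more careful than the paper's informal phrasing.
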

\begin{proof}
Suppose that consensus is achieved on $x \in \{\delta, \ldots, 1-\delta \}$ (the case in which consensus is achieved on a non-extreme negative opinion is symmetric and hence omitted).

 From the proof of Theorem 5 we know that, if $x \in \{\delta,\ldots, \lambda\}$, then consensus on $x$ can be stable only if $\tb \leq \frac{\frac{\delta}{2}(1+h)}{x-\frac{\delta}{2}}$.  When $\delta \rightarrow 0$, the condition became $\tb \leq 0$ that is impossible by definition.

 If $x \in \{\lambda + \delta,\ldots, 1-\delta\}$, from the proof of Theorem 5, we know that consensus can be stable only if $\tb \leq \frac{\frac{\delta}{2}(1+h)}{1-x-\frac{\delta}{2}}$. When $\delta \rightarrow 0$, the condition became $\tb \leq 0$ that is impossible by definition.
\end{proof}

\subsection{Convergence}
Previous sections characterized when and which consensus can be achieved depending on the homophily and the social media influence in opinion dynamics with synchronous update rules.

However, we should highlight that, although, as suggested above, a stable state always exists, synchronous updates can make the opinion dynamics unable to converge to this state from specific initial opinions profiles. In the following we provide a simple example.
\begin{proposition}
An opinion dynamics with synchronous updates may not converge to a stable state.
\begin{proof}
Let consider a two-player social network. We note the two player as $l$ and $r$.  Let $w_{l,r} = 10$  the weight of the edge, $\Theta = \{-1, -\frac{1}{2}, 0, \frac{1}{2}, 1\}$ the opinions' set and $b = 1$ the strength of social media influence. The initial opinions of the agents are $x_l^0 = -1, x_r^0 = 1$.

At step $1$, the agent $l$ choose the opinion $x \in \Theta$ that minimizes $c_l(x^0) = 1(x+1)^2+10(x-1)^2$. It's easy to see why $x_l^1 = \argmin_{x}c_l(x^0) = 1$. In the same way, the opinion of agent $r$ at the step $1$ is $x_r^1 = \argmin_{x}c_r(x^0) = -1$. Consequently, at step $2$ we have $x_l^2 = 1, x_r^2 = -1$ and at step $t$ we have $x_l^t = -1^{t+1}, x_r^t = -1^{t}$. Therefore, the dynamics continues indefinitely since $x_l^{t-1} \neq x_l^{t} \text{ and } x_r^{t-1} \neq x_r^{t}, \forall t \geq 0$.
\end{proof}
\label{prep1}
\end{proposition}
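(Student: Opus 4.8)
The plan is to exhibit an explicit instance on which the synchronous dynamics oscillates forever, so that it never settles on any fixed (hence necessarily stable) profile. The natural candidate is the smallest possible network: two agents $l$ and $r$ joined by a single edge of large weight $w_{l,r}$, a small social media influence $b$, and divergent extreme initial opinions $x_l^0 = -1$, $x_r^0 = 1$. The intuition is that when the peer influence dominates the media influence, each agent's best response at the next step is to jump to the opinion currently held by the other agent, which produces a period-two cycle between $(-1,1)$ and $(1,-1)$.

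Concretely I would fix $\Theta = \{-1, -\tfrac12, 0, \tfrac12, 1\}$, $w_{l,r} = 10$, and $b = 1$. The first step is to compute the best response of each agent at time $0$. Since $x_l^0 = -1 < -\lambda$ we have $s(x_l^0) = -1$, so agent $l$ minimizes $c_l(x) = (x+1)^2 + 10(x-1)^2$ over $\Theta$; evaluating at the five points of $\Theta$ (equivalently, noting the unconstrained minimizer $\tfrac{-1+10}{1+10} = \tfrac{9}{11}$ lies closest to $1$) gives $x_l^1 = 1$. By the mirror-image computation, $x_r^1 = -1$. Thus after one synchronous step the profile has flipped to $(1,-1)$.

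The second step is a one-line induction: the instance is invariant under simultaneously swapping the two agents and negating all opinions, and the flipped profile $(1,-1)$ is of exactly the same form as the starting profile with the roles of $l$ and $r$ exchanged. Hence the same computation shows that $x_l^t = (-1)^{t+1}$ and $x_r^t = (-1)^t$ for every $t$. In particular $x_l^{t} \neq x_l^{t+1}$ for all $t$, so the sequence $\bfx^t$ is not eventually constant; since a convergent dynamics must converge to a profile in which no agent changes — i.e.\ a stable (Nash) profile — this shows the dynamics does not converge, even though the consensus profiles on $-1$ and on $1$ are stable and are simply never reached from this initial condition.

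There is essentially no serious obstacle here: the argument is a finite verification plus a trivial symmetry induction. The only point requiring a little care is the discrete best-response computation — one must check that the chosen extreme opinion strictly beats all competing opinions of $\Theta$ (in particular $\tfrac12$ and $0$), which is why it is convenient to pick parameters such as $w_{l,r}/b = 10$ that push the unconstrained optimum well past $\tfrac12$; with such a margin the comparison is immediate. I would also make explicit at the outset what "not converging" means — that $\bfx^t$ never becomes constant — so that the conclusion is unambiguous.
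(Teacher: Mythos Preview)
Your proposal is correct and follows essentially the same approach as the paper: the same two-agent instance with $w_{l,r}=10$, $b=1$, $\Theta=\{-1,-\tfrac12,0,\tfrac12,1\}$, and initial profile $(-1,1)$, showing that the synchronous best-response flips the profile to $(1,-1)$ and hence oscillates with period two. If anything, your write-up is a bit more careful in making the symmetry-based induction and the meaning of ``not converging'' explicit.
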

However whenever the opinion dynamics with synchronous updates converges, the stable state at which it converges can be computed, since it deterministically depends only on the initial opinion profile $x^0 = (x_1^0, \ldots, x_n^0)$.

In the section~\ref{seAsync} we will study the opinion dynamics with asynchronous updates. We will
show a specular behaviour: indeed, it will turn out that in this case the dynamics always converges to a stable state, but more than one stable states may exist.
Interestingly, in section~\ref{async_res} we will also show that the findings about when and which consensus are reached by the dynamics obtained through the analysis of the synchronous case even extend to the asynchronous case.

%
%


\section{General Networks}
\label{sync_experiments}
In previous section we presented some results related to the two-block model of a social network. We conjecture that our results hold in more general settings under the hypothesis that it is possible to distinguish in the network two well separated sets of similar agents. In this section we present some experimental evidences to support our conjecture. In particular, we run our experiments on stochastic two-block model graphs, random graphs and on real graphs. In the latter two cases we use algorithmic techniques to separate nodes in two components and then we define weights of the edges in order to define the influence coming on an agent from her own component and from the other component. 


Observe that in a symmetric two-block model network, for each agent $i$,
$\sum_{\begin{subarray}{c}j \in P\\(i,j) \in E\end{subarray}} w_{ij}$,
that is the influence that she receives from the other agents in the same component, is equal to a constant $\ain$. At the same time, for each agent $i$,
$\sum_{\begin{subarray}{c}j \in \notP\\(i,j) \in E\end{subarray}} w_{ij}$,
that is the influence that she receives from the other agents in the different component, is equal to a constant $\aout$. Moreover, agents in the same component have the same initial opinion. In our first experiment we extend this model by relaxing some of these assumptions. In particular, the set of vertices is $V = L \cup R$, where $\vert L\vert  = \vert R\vert  = N$; each edge between two agents in the same component exists with probability $p_{in}$ while each edge between agents in different components exists with probability $p_{out}$. However, all edges have the same weight. 
Thus, two agents in the same component may have different neighbours, even if they receive the same expected influences. Indeed, the expected influence received by her component is equal to $(N-1)p_{in}$ and the expected influence that an agent receive from her opposite component is equal to $N p_{out}$. Furthermore, agents in the same component can have different initial opinions.

We set $N = 50$ and simulate our opinion dynamics with different values of $\pin$, $\pout$, $\delta$, and $b$. For each setting we run $n_p = 1000$ simulations.
For each simulation, given the two blocks, say $L$ and $R$, we assume that for each agent in $L$, the initial opinion $x_i^0$ is drawn at random in the interval $[l_L, h_L]$, and for each agent in $R$ the choice is drawn at random in the interval $[l_R, h_R]$, where $h_L$ and $l_R$ are set respectively to $-\xi$ and $\xi$, where $\xi$ is drawn uniformly at random in the interval $[0, \lambda+\delta]$, $l_L$ is drawn at random in the interval $[-1, h_L]$, and $h_R$ is drawn at random in the interval $[l_R, 1]$. Let $m$ be the number of runs in which the dynamics converges to consensus, we measure the consensus probability as $p_c = m/n_p$ and the $95\%$ confidence interval as $p_c \pm2\sqrt{p_c(1-p_c)/n_p}$

Next we consider networks generated using three well-known network formation models: the Random Graphs model \citep{random}, the Watts-Strogatz model \citep{WattsStrogatz} and the Hyperbolic Random Graph model \citep{hyperbolic}. We remark that the Random Graph model is generally used to generate random networks. The other two graph models are known to generate networks enjoying properties usually more similar to real social networks. In particular, the Watts-Strogatz model is known to generate smallworld networks (i.e., network with small diameter and a large clustering index). In \citep{hyperbolic} it has been showed that the Hyperbolic Random Graph model, for a special choice of parameters, generate networks that are smallworlds with a degree distribution that is a power law, a characteristic that can be find in several real-life social networks.

In the Random Graphs model, for each pair of vertices $u$ and $v$ the edge $(u, v)$ is created with probability $p$. Notice that, in general, the random graph $G = (V, E)$ generated in this way cannot be separated in well-defined components of the same size. However, we can partition the set of vertices in two components, $L$ and $R$, by running the well-know algorithm of \cite{Kerningham-Lin}, that returns the partition generated by the sparsest cut. We then assign weight $\win$ to edges among nodes in the same component, and $\wout$ to all the remaining edges.
Note that, as in the two block model, here each node receives a different social influence from nodes within the same component and nodes of the opposite component. However, in this case even the expected influences received by agents of the same component may be different. Indeed, for each node $i \in P$, with $P \in \{L, R\}$, we set $\ain(i) = \win \cdot \vert \{j \in P \colon (i, j) \in E\}\vert $ and $\aout(i) = \wout \cdot \vert \{j \notin P \colon (i, j) \in E\}\vert $, and it is not hard to build a graph such that $\ain(i) \neq \ain(j)$ or $\aout(i) \neq \aout(j)$ for a pair of nodes $i,j$.
Nevertheless, we can define, even in this setting, the homophily ratio as $h = \frac{\ain^*}{\aout^*}$, where $\ain^* = \frac{1}{\vert V\vert } \sum_{v \in V} \ain(v)$ and $\aout^* = \frac{1}{\vert V\vert } \sum_{v \in V} \aout(v)$.

Watts-Stogatz networks are generated by positioning nodes in a metric (usually Euclidean) space and linking nodes through two classes of links: two nodes whose distance is below a given threshold $r$ are linked through so-called \emph{strong ties}; each node has $k$ additional links, termed \emph{weak ties}, connecting to randomly selected endpoints. 
Hyperbolic Random Graph networks are generated in a similar way but the metric space is restricted to be hyperbolic with negative curvature. For both these two classes of networks we partitionate nodes in two communities and set the weights of the links as described for the Random Graphs model.
For each of these three classes of graphs we set the number of agents to be $100$ and we run our opinion dynamics for different values of $\win$, $\wout$, $\delta$, and $b$. For each setting, we run 1000 simulations and we compute the initial opinions, the consensus probability and the confidence interval as stated before.

Finally, we considered two samples of real social networks that are freely available in the SNAP library \citep{snap}. The first one, \texttt{ego-Facebook}, is a sample of
4039 nodes and 88234 edges retrieved from 
Facebook
network \citep{ego-Facebook}. The second one, \texttt{\seqsplit{feather-lastfm-social}} consists in a less dense network
of 7624 nodes and 27806 edges
\citep{lastfm}. In order to run multiple simulations on these networks we do not use a deterministic partitioning algorithm to retrieve communities, but for each simulation an agent is assigned to cluster $L$ with a probability $p_L$ drawn uniformly at random in $[0.4, 0.6]$, and to cluster $R$ otherwise. We will show below that, despite this random choice of the partitions, we still are able to achieve results that are similar to previous more regular networks. Weights, homophily ratio, the influence of the social media, initial opinions, consensus probability and the confidence interval are then computed as described above (but mediated over only 500 simulations, due to the larger size of these networks).

We observe that numerical oscillations can make impossible to reach consensus even if opinions of agents are very close to each other. For this reason, we consider a \emph{relaxed} definition of \emph{consensus}. In particular, following the analysis of \cite{immorlica}, we will focus on the average opinion $\overline{x}_P = \frac{\sum_{i \in P} x_i}{\lvert P \rvert}$ for each partition $P$, and on its projection $\underline{x}_P$ on $\Theta$, being the opinion in $\Theta$ closest to $\overline{x}_P$. Then, a stable opinion profile $x = (x_1, x_2, \ldots, x_n)$ is a consensus if $\underline{x}_L = \underline{x}_R$. We performed extensive simulations to determine if the \emph{relaxed} definition of consensus can make our results inaccurate. Specifically, we compute the distributions of opinions when a relaxed consensus is reached. We observe that when a relaxed consensus is reached, the mean of the opinions is near to the relaxed consensus value and the variance is lower than $0.5 \cdot \delta$. Remember that the opinion of each agent belongs to the set $\Theta = \{-1, -1+\delta, \ldots, 0, \ldots, 1-\delta, 1\}$. In Figure~\ref{loss_in_Consensus} we show the value of the relaxed consensus and the distribution of opinions at relaxed consensus on the \texttt{ego-facebook} social network, as $\tb$ increases. In particular, we can observe that the mean is near to the relaxed consensus value and the variance is tight w.r.t the value of $\delta$. The results showed are confirmed by repeating the same simulations $30$ times. We obtained very similar results for different values of $\delta$ and in most of the models considered in our analysis: Two-Block Model, Random graph, Watts-Strogatz graphs, Hyperbolic Random graphs and \textit{feather-lastfm-social}. We omit these other results for the seek of readability.

\begin{figure}[htb]
\centering
\includegraphics[scale=0.6]{/loss_in_Consensus/facebook}
\caption{
We show the value of \emph{relaxed} consensus, the opinion's mean and the opinion's variance at the \emph{relaxed} consensus for \texttt{ego-facebook} social network when $\delta = 0.5$ and $b$ is set to have $\tb \in \{0.5, 1, 1.5, \ldots, 3\}$.\\
Similar results are obtained for different value of $\delta$ and different network models.
} 
\label{loss_in_Consensus}
\end{figure}

In the light of these results, in the following we consider the \emph{relaxed} consensus since it is numerically stabler than consensus and preserves the accuracy of the results. We will refer to the \emph{relaxed} consensus simply as consensus.

Note that the stochastic networks strongly depend on the particular choice of the input parameters of the randomized generative algorithm. In particular, the Random Graph model strongly depends on the probability ($p$) of creating an edge between each pair of nodes of the graph, while the Watts-Strogatz model depends on the particular choices of $r$ and $k$. We observed that our results are essentially independent from these particular choices of the parameters. Indeed, we are interested on the trend of the consensus probability and not on the absolute values. We observed that the former does not change by varying the model's parameters and it reflects the findings discussed in section~\ref{sec3}.
For example, in Figure~\ref{fig_p} we show the trend of the consensus probability for the Random Graph Model for different values of $p$. In particular, we show how the trend of the consensus probability changes as the social media influence (Figure~\ref{fig_p_a}), the initial opinion of agents (Figure~\ref{fig_p_b}), and the homophily ratio (Figure~\ref{fig_p_c}) changes. In Figure~\ref{fig_p_c} we observe that there are not negligible numerical differences between $p = 0.6$ or $p=0.8$ and $p=0.2$, and they increase as the average initial opinion of agents in a component increases. Specifically we have a difference larger than $0.2$ when $\tb > 0.8$. However, as we specified above, we are interested on the consensus probability. We can observe that, while we have a change of scale, the consensus probability follows the same monotone trend for all the density values $p$ of the Random Graph.

\begin{figure}[htb]
	\begin{subfigure}[c]{0.49\textwidth}
\includegraphics[scale=0.415]{/Different_p_SimulationsGeneralOpinions/Random}
\caption{Social media's influence}
\label{fig_p_a}
\end{subfigure}
\hfill
	\begin{subfigure}[c]{0.49\textwidth}
\includegraphics[scale=0.415]{/Different_p_ConsensusH/Random}
\caption{Initial opinions}
\label{fig_p_b}
\end{subfigure}\\
	\begin{subfigure}[c]{\textwidth}
\centering
\includegraphics[scale=0.415]{/Different_p_CompAlpha/Random}
\caption{Homophily}
\label{fig_p_c}
\end{subfigure}
\caption{
We show the trend of the consensus probability for the Random Graph model for $p \in \{0.2, 0.6, 0.8\}$.\\
In Figure~\ref{fig_p_a} we show the trend of the consensus probability when $\delta = 0.125, \win = \wout = 1$, and $b$ is set to have $\tb \in \{0.5, 1.5, \ldots, 14.5\}$.\\
In Figure~\ref{fig_p_b} we show how the consensus probability changes as the average opinion of agents in a component varies when $\delta = 0.25, \win = 1, \wout$ is drawn at each simulation uniformly at random in $[0.3, 4]$ and $\tb = 1.2\tau_1(h)$.\\
In Figure~\ref{fig_p_c} we show the consensus probability trend when $\delta = 0.125, \tb = 0.5, \wout = 1$ and $\win = \{0.5, 1, 2, 4\}$. Notice that, $h$ is the expected value of the homophily ratio over all runs involving the same value of $\win$. 
} 
\label{fig_p}
\end{figure}

For the Watts Strogatz Model we observe almost the same results. In particular, we computed how the trend of the consensus probability varies as the influence of social media, the opinion of agents and the homophily ratio changes for different combination of parameters. We omit these results for the seek of readability since they are very similar to the ones showed in Figure~\ref{fig_p}.

We shown that the trend of the consensus probability does not depends on the particular choice of the parameters of the generative algorithms. In the light of this, we decided to do not assume any specific value for these parameters.
Specifically, for Random Graph networks, at each simulation we draw the value of $p$ uniformly at random in the interval $[0.3, 0.7]$. Similarly, for Watts-Strogatz networks, at each simulation we draw $r$ uniformly at random in the interval $[5, 40]$ and $k$ uniformly at random in the interval $[4,30]$.

On the other hand, for the Hyperbolic Random model we adopted a different approach. We fixed some of the parameters of the generative algorithm to make the networks similar to real social networks \citep{hyperbolic}. Specifically, the Hyperbolic Random Model takes three parameters: the exponent of the power-law distribution $\gamma$, the temperature $T$ and the target average degree of each node $K$. In order to have a coefficient of clustering significantly higher than $0$ and an efficient decentralized search, properties frequently observed in real social networks, we must set $T < 1$ and $\gamma < 3$. Specifically, we set $T = 0.6$ and $\gamma = 2.5$. Moreover, we observed that the trend of consensus probability does not significantly change by varying the value of $K$, we omit these results for the seek of readability since they are very similar to the one showed in Figure~\ref{fig_p}. Hence, as we did for the other network formation models, at each simulation we draw $K$ uniformly at random in the interval $[0.07 \cdot \lvert V \rvert, 0.75 \cdot \lvert V \rvert]$.

\paragraph{Our Experimental Results}
Our experiments highlight that the consensus probability essentially depends only on the relative social media influence $\tb$ and on the homophily ratio $h$, and not on the absolute values of the media influence $b$, the inter-cluster influence $\aout$, and the intra-cluster influence $\ain$. Indeed, Figure~\ref{fig1} shows that the probability of reaching a consensus is essentially the same when $\tb$ and $h$ are unchanged, even if we change the values of $b, \aout, \ain$.
\begin{figure}[htb]
\begin{subfigure}[c]{0.49\linewidth}
\includegraphics[scale=0.4]{/DifferentValueSameRatio/BlockModel}
\caption{Two Block Model}
\label{fig1a}
\end{subfigure}%
\begin{subfigure}[c]{0.49\linewidth}
\includegraphics[scale=0.4]{/DifferentValueSameRatio/facebook}
\caption{\texttt{ego-Facebook}}
\label{fig1b}
\end{subfigure}
 \caption{In Figure~\ref{fig1a} we show how the consensus probability changes when  $\delta = 0.25$, $\pin = \frac{100\pout}{ 49}$ (so $h$ is always $2$) and $b \in \{25\pout, 25\pout+50\pout, \ldots, 525\pout)\}$ (so $\tb \in \{0.5, 0.5+1, \ldots, 10.5\}$).\\
 In Figure~\ref{fig1b} we show the consensus probability for $\tb \in \{0.5, 2.5, \ldots, 14.5\}$, and for each of these value, we evaluate this probability on two different settings.
 in the first,
$\win = \wout = 1$
(from which the homophily ratio is $h_1$)
and $b$ such that $b\aout^* = \tb$;
in the second,
with different
blocks,
we set
$\win = 1$, $\wout$ is such that the
homophily ratio
is
is equal to $h_1$, and $\tb$
is
as in the first setting.\\
 %
The confidence intervals are shown as error bars. Similar results have been obtained for the other network models.}
\label{fig1}
\end{figure}

The analysis of the symmetric two-block model also highlights that the probability of consensus usually decreases when either $\tb$ or $h$ increases. This behaviour is confirmed in all our experiments, even for the more complex networks. Specifically, Figure~\ref{fig2} shows how the probability of consensus changes as $\tb$ increases for different values of $h$.
\begin{figure}[htb]
 \begin{subfigure}{0.49\textwidth}
\centering
\includegraphics[scale=0.415]{/ConsensusProbabilityCompAlpha/BlockModel}
\caption{Two Block Model}
\label{fig2a}
\end{subfigure}
\hfill
\begin{subfigure}{0.49\textwidth}
\centering
\includegraphics[scale=0.415]{/ConsensusProbabilityCompAlpha/facebook}
\caption{\texttt{ego-Facebook}}
\label{fig2b}
\end{subfigure}
 \caption{In Figure~\ref{fig2a} we show how the consensus probability changes when $\delta = 0.25$, $\pout = 0.09$, $\pin$ is set in order to have the desired $h$, and $b$ is set to have $\tb \in \{0.3, 0.3+\frac{1}{25\pout}, \ldots, 14.88\}$).\\
 In Figure~\ref{fig2b} we show how the consensus probability changes when $\delta = 0.25$, $\wout = 1$, $\win \in \{1, 4, 8\}$, and $b$ is set to have $\tb \in \{1,2, 3, \ldots, 10\}$). Note that, in this setting, $h$ is the expected value of the homophily ratio over all runs involving the same value for $\win$.\\
The confidence intervals are shown as error bars. Similar results have been obtained for the other network models.}
\label{fig2}
\end{figure}
It is immediate to see that, except for low values of $\tb$, the probability of consensus effectively decreases with $\tb$. Moreover, our results show that, for each value of $\tb$, the probability of consensus usually appears to be lower when $h$ is large (notice that, due to the fact that for very large $\tb$ the probability of consensus is very small, in this range the results showed in Figure~\ref{fig2} are highly affected by statistical noise, as it is also highlighted by the fact that the 95\% confidence interval are much larger in this range). An apparently strange behaviour occurs for low values of $\tb$. Indeed, in this range we have that the consensus probability increases. However, this behaviour is still in line with the theoretical results achieved for the symmetric two-block model.
Indeed, as observed above, for large values of $h$, the probability of consensus is expected to have this non-monotone behaviour: it first increases (by going from no consensus to possible consensus on $0$), and then decreases (by going from possible consensus on $0$ to no consensus again).

Results in Section~\ref{sec3} show that convergence to consensus is affected by the initial opinions of agents: indeed conditions for non-consensus in case both initial opinions are larger than $\lambda$ in absolute value are stricter than in the case of a single initial opinion far from $0$, and the latter are much more stricter in the case of both initial opinion are close to zero. This behaviour still holds even in more complex graph structures. Specifically, Figure~\ref{fig3} shows that the consensus probability decreases as the average opinion of the agents in a component goes to $1$.
\begin{figure}[htb]
  \begin{subfigure}[t]{0.49\textwidth}
\centering
\includegraphics[scale=0.415]{/ConseusProbabilityInitialMeanOpinion/BlockModel}
\caption{Two Block Model}
\label{fig3a}
\end{subfigure}
\hfill
\begin{subfigure}[t]{0.49\textwidth}
\centering
\includegraphics[scale=0.415]{/ConseusProbabilityInitialMeanOpinion/facebook}
\caption{\texttt{ego-facebook}}
\label{fig3b}
\end{subfigure}
 \caption{In Figure~\ref{fig3a} we show how the consensus probability changes when $\delta = 0.25$, $\pout = 0.2$, $\pin = 0.4$, and $b$ is set to have $\tb = 1.2\cdot\tau_1(h)$.
 In Figure~\ref{fig3b} we show how the consensus probability changes when $\delta = 0.25$, $\win = 1$, $\wout$ is drawn at each simulation uniformly at random in $[0.3, 4]$, and $b$ is set to have $\tb = 1.2\cdot\tau_1(h)$.\\
 Note that in this experiment the initial opinions are still drawn uniformly at random in intervals $[l_L, h_L]$ and $[l_R, h_R]$, but these interval are fixed (they are chosen to be the same interval but with opposite sign) to have that the average opinions of agents in each component has absolute value $x\delta$, with $x \in \{0, 0.6, 0.8, 1, 1.5, 2,3, 4\}$.\\
The confidence intervals are shown as error bars. Similar results have been obtained for the other network models.}
\label{fig3}
\end{figure}
Interestingly, the figure highlights that a sharp change of probability occurs exactly around $\lambda=0.5$, by confirming our findings.

Actually, we also run experiments in which we impose initial opinions to be larger or smaller than $\lambda$. Again, we observe that the behaviour on complex networks is very close to the one described for the simple symmetric two-block model.
Specifically, we first considered the case in which initial opinions are restricted to be larger than $\lambda$. Figure~\ref{fig4} shows how the probability of consensus changes as $\tb$ increase in this setting.
\begin{figure}[htb]
\begin{subfigure}[c]{0.49\textwidth}
\includegraphics[scale=0.415]{/LambdaImpossible/BlockModel}
\caption{Two Block Model}
\label{fig4a}
\end{subfigure}%
\begin{subfigure}[c]{0.49\textwidth}
\includegraphics[scale=0.415]{/LambdaImpossible/facebook}
\caption{\texttt{ego-facebook}}
\label{fig4b}
\end{subfigure}
\caption{In Figure~\ref{fig4a} we show how the consensus probability changes when $\delta = 0.25$, $\pout = 0.2$, $\pin = 0.4$, and $b$ is set to have $\tb \in \{0.4, 0.405, 0.41, \ldots, 10\}$.\\
 In Figure~\ref{fig4b} we show how the consensus probability changes when $\delta = 0.25$, $\win = \wout = 1$, and $b$ is set to have $\tb \in \{0.5, 1.5, \ldots, 10.5\}$.\\
 In order to have large divergent initial opinion, these are drawn uniformly at random in intervals $[-1, -\lambda-\delta]$ and $[\lambda+\delta, 1]$.\\
The confidence intervals are shown as error bars. Similar experiments have been run also on the remaining network models with very similar results.}
\label{fig4}
\end{figure}
We observe that the figure shows that the decrement of the probability of consensus occurs as soon as the value of $\tb$ is around $\tau_1(h)$. Hence, not only the general behaviour that emerges from the symmetric two-block model extends to more general networks, but we can also say that the given thresholds turn out to be quite precise in describing the behaviour also in more general networks.

Similar observations hold when we consider that agents are allowed to take opinions smaller than $\lambda$ in absolute values (see Figure~\ref{fig5}).
\begin{figure}[htb]
 \begin{subfigure}[c]{0.49\textwidth}
\includegraphics[scale=0.415]{/ZeroPossible/BlockModel}
\caption{Two Block Model}
\label{fig5a}
\end{subfigure}
\hfill
\begin{subfigure}[c]{0.49\textwidth}
\includegraphics[scale=0.415]{/ZeroPossible/facebook}
\caption{\texttt{ego-facebook}}
\label{fig5b}
\end{subfigure}
 \caption{Figures show the results of experiments in the same settings as Figure~\ref{fig4}, except that initial opinion are not constrained to be far from zero (i.e., they are generated as described above).\\
The confidence intervals are shown as error bars. Similar experiments have been run also on the remaining network models with very similar results.}
\label{fig5}
\end{figure}
Here, there are three possible phases: when $\tb$ is small, we have high probability of consensus; for intermediate values, the probability is smaller, but still far away from zero; finally, for large $\tb$, the probability of consensus get close to zero. Interestingly, the phases changes occurs, as indicated by results above, around $\tau_1(h)$ and $\tau_2(h)$.

Finally, we consider the case in which initial opinions do not diverge.
Specifically, Figure~\ref{fig6} shows that, except for real datasets, it is possible to distinguish two phases: for low values of $\tb$, there is an high probability of consensus, whereas for larger values this probability decreases (but it does not go to zero).
Moreover, Figure~\ref{fig6} confirms that the smaller is the homophily ratio the smaller $\tb$ need to be to make the probability of consensus large by allowing consensus on an opinion different from $-1, 0, 1$, as observed in Theorem~\ref{thm:conv_small}.
Interestingly, for real datasets this behaviour is not confirmed, since the probability of consensus is close to $1$ regardless the values of $\tb$ and $h$. We conjecture that this different behaviour depends on the large density of these networks with respect to the remaining ones. However, this will requires a more careful analysis that would focus on the link between the impact of social media recommendations and the structural and topological properties of the network.
\begin{figure}[htb]
 \begin{subfigure}[c]{0.49\textwidth}
\includegraphics[scale=0.415]{/NotDivergentOpinion/BlockModel}
\caption{Two Block Model}
\label{fig6a}
\end{subfigure}
\hfill
\begin{subfigure}[c]{0.49\textwidth}
\includegraphics[scale=0.415]{/NotDivergentOpinion/random}
\caption{Random Graphs}
\label{fig6b}
\end{subfigure}\\
\begin{subfigure}[c]{\textwidth}
\centering
\includegraphics[scale=0.415]{/NotDivergentOpinion/facebook}
\caption{\texttt{ego-facebook}}
\label{fig6c}
\end{subfigure}
 \caption{In Figure~\ref{fig6a} we show the results in the same setting as Figure~\ref{fig2a}.\\
In Figure~\ref{fig6b} we show how the consensus probability changes when $\delta = 0.125$, $\wout = 1, \win \in \{0.5,1,2,4\}$ and $b$ is set to have $\tb \in \{0.5,1.5,\ldots,10.5\}$. As Figure\ref{fig2}, $h$ is the expected value of the homophily ratio over all runs involving the same value of $\win$.\\
In Figure~\ref{fig6c} we show the results in the same setting as Figure~\ref{fig2b}.\\
Note that in this experiment the initial opinions are not constrained to diverge; indeed, they are still drawn uniformly at random in intervals $[l_L,h_L]$ and $[l_R, h_R]$, but both intervals are equal to $[-1,1]$.\\
The confidence intervals are shown as error bars. Similar experiments have been run also on the remaining network models with very similar results.
}
\label{fig6}
\end{figure}
However, differently from what happens for the case of divergent initial opinions, we here highlight a difference between experimental and theoretical results. The latter ones show that the thresholds among the two phases should depend on the homophily ratio $h$. However, this dependence does not appear in experiments. We leave open the problem of investigating about the reasons behind this discrepancy.

\section{Asynchronous Updates}
\label{seAsync}
In this section we will focus on opinion dynamics with asynchronous updates, where at each time step, a single agent, arbitrarily chosen, is allowed to update his/her opinion.

We will show that in this setting, the game admits a generalized potential function \citep{monderer}.
Consequently, unlike the synchronous case, the dynamics always converges to a stable state.

We will see that in the asynchronous case we may also establish useful bounds on the convergence of the dynamics.

Finally, we will focus on consensus, by providing experimental evidence that the results described in the previous section still hold in the asynchronous case.

\subsection{Convergence}
\label{convergence}
We start by showing that the proposed opinion game is a generalized ordinal potential game and thus the opinion dynamics with asynchronous updates always converges to a stable profile \citep{monderer}.

\begin{theorem}
\label{thm:stable}
For every $G=(V,E,w)$ and every $b\geq 0$, the opinion dynamics with asynchronous updates always converges to a stable opinion profile $\bfx$.
\end{theorem}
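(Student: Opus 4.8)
The plan is to equip the opinion game with a \emph{generalized ordinal potential} in the sense of \citep{monderer}: a function $\Phi$ on opinion profiles that strictly decreases along every improving unilateral deviation. Since the set of opinion profiles is finite, such a $\Phi$ immediately gives the finite improvement property, and hence that any asynchronous sequence of best responses (in which, whenever some agent can strictly improve, an improving agent is eventually scheduled) terminates at a Nash equilibrium, i.e.\ a stable profile.

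The candidate potential is the ``social cost''
\[
\Phi(\bfx) \;=\; \sum_{i \in V} b\,\bigl(x_i - s(x_i)\bigr)^2 \;+\; \sum_{(i,j) \in E} w_{ij}\,\bigl(x_i - x_j\bigr)^2 .
\]
I would fix an agent $i$ and a step in which $i$ replaces her opinion $x_i$ by a strictly better response $x_i'$, leaving all other agents unchanged. All summands of $\Phi$ not involving $i$ cancel, and the edge terms $w_{ij}(x_i-x_j)^2$, $(i,j)\in E$, occur identically in $\Phi$ and in the cost $c_i$; the only discrepancy is that the social-media target inside $c_i$ is the recommendation $s(x_i)$ computed from $i$'s opinion \emph{before} the update, whereas in $\Phi$ it is $s(x_i')$. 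A short computation then gives
\[
\Delta\Phi - \Delta c_i \;=\; b\,\Bigl[\bigl(x_i' - s(x_i')\bigr)^2 - \bigl(x_i' - s(x_i)\bigr)^2\Bigr],
\]
where $\Delta\Phi$ and $\Delta c_i$ denote the respective increments caused by the move.

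The heart of the argument --- and the only step needing a genuine idea --- is that this correction term is never positive, because $s$ sends every point of $\Theta$ to the element of $\Omega$ closest to it: for $\lambda = 1/2$ and $y \in [-1,1]$, a case split on $y < -1/2$, $|y| \le 1/2$, $y > 1/2$ shows $|y - s(y)| \le |y - \omega|$ for every $\omega \in \Omega = \{-1,0,1\}$. Taking $y = x_i'$ and $\omega = s(x_i)$ yields $\Delta\Phi \le \Delta c_i < 0$, so $\Phi$ is a generalized ordinal potential (for arbitrary $\lambda$ one replaces $s$ by the nearest-point map on $\Omega$, which is without loss of generality as already remarked).

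It then remains to close the argument in the usual way: $\Phi$ takes finitely many values and drops strictly at every strictly improving update, so along any asynchronous run only finitely many strictly improving updates can occur; after the last one no scheduled agent can strictly improve, i.e.\ the profile is a Nash equilibrium, hence stable. I expect the main obstacle to be precisely the recommendation mismatch noted above --- it prevents $\Phi$ from being an exact potential and forces the weaker ``generalized ordinal'' notion --- while everything else, including the case analysis for the nearest-point inequality and the finite-improvement bookkeeping, is routine.
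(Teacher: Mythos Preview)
Your proposal is correct and follows essentially the same approach as the paper: the same potential $\Phi(\bfx)=\sum_i b(x_i-s(x_i))^2+\sum_{(i,j)\in E}w_{ij}(x_i-x_j)^2$, the same observation that the edge terms match between $\Phi$ and $c_i$, and the same key inequality $(x_i'-s(x_i'))^2\le (x_i'-s(x_i))^2$ coming from the fact that, for $\lambda=1/2$, $s$ is the nearest-point map onto $\Omega$. Your presentation via the single identity $\Delta\Phi-\Delta c_i=b\bigl[(x_i'-s(x_i'))^2-(x_i'-s(x_i))^2\bigr]$ is slightly more compact than the paper's parallel computation of the two differences, but the substance is identical.
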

\begin{proof}
 The theorem follows by showing that the function $\Phi(\bfx) = \sum_{i \in V} b_i (x_i - s(x_i))^2+P(x)$ is a generalized ordinal potential function for the game described above, where $\bfx = (x_1, \ldots, x_n)$ is an opinion profile and $P(\bfx) = \sum_{(i,j) \in E} w_{i,j}(x_i-x_j)^2$.
Let $P_i(\bfx) = \sum_{j \colon (i,j) \in E} w_{i,j} (x_i-x_j)^2$, then the cost of agent $i$ given the opinion profile $\bfx$ is $c_i(\bfx) = b(x_i-s(x_i))^2 + P_i(x)$.

We call an edge $(e_1,e_2)$ a discording edge if its endpoints have different opinions, i.e., $x_{e_1} \neq x_{e_2}$. Since the graph $G$ is undirected, then $\sum_{i \in V} P_i(\bfx)$ is twice the sum of weights of all the discording edges with respect to the opinion profile $\bfx$. Hence, $P(\bfx) = \frac{1}{2}\sum_{i \in V} P_i(\bfx)$.

Denote by $(y_i,\bfx_{-i})$ the opinion profile obtained from $\bfx$ when player $i$ switches from opinion $x_i$ to opinion $y_i$. Then
\begin{equation*} 
c_i(\bfx) - c_i(y_i,\bfx_{-i}) = b_i\left[(x_i-s(x_i))^2-(y_i-s(x_i))^2\right]+P_i(\bfx)-P_i(y_i,\bfx_{-i}).
\end{equation*}
 From the definition of function $s$ and the choice of $\lambda = \frac{1}{2}$, it follows that $(y_i-s(x_i))^2 \geq (y_i - s(y_i))^2$. Hence:
\begin{equation}
 \label{eq:cost_diff}
  c_i(\bfx) - c_i(y_i,\bfx_{-i}) \leq b_i\left[(x_i-s(x_i))^2-(y_i-s(y_i))^2\right]+P_i(\bfx)-P_i(y_i,\bfx_{-i}).
\end{equation}

The difference in the function $\Phi$ between the same pair of profiles is:
\begin{align*}
\Phi(\bfx) - \Phi(y_i, \bfx_{-i}) & = \sum_{k \in V} b_k (x_k - s(x_k))^2+P(\bfx) - \sum_{\begin{subarray}{c}j \in V\\j \neq i\end{subarray}} b_j (x_j - s(x_j))^2\\
 & \qquad - b_i(y_i-s(y_i))^2 - P(y_i, \bfx_{-i})\\
 & = b_i\left[(x_i-s(x_i))^2-(y_i-s(y_i))^2\right]+P(\bfx)- P(y_i, \bfx_{-i})
\end{align*}
Let $D_i(\bfx) = \sum_{\begin{subarray}{c}(j,z) \in E\\j,z \neq i\end{subarray}} w_{j,z} (x_j - x_z)^2$ be the sum of the weights of discording edges not incident on $i$ in the opinion profile $\bfx$. It is immediate to see that $D_i(\bfx)$ is not affected by the deviation of player $i$, and thus $D_i(\bfx) = D_i(y_i, \bfx_{-i})$. Thus, we can write $P(\bfx)$ as:
\begin{align*}
P(\bfx) & = \sum_{\begin{subarray}{c}(j,z) \in E\\j,z \neq i\end{subarray}} w_{j,z} (x_j - x_z)^2 + \sum_{j \colon (i,j) \in E} w_{i,j} (x_i-x_j)^2 = D_i(\bfx)+P_i(\bfx)
\end{align*}
Similarly, we have $P(y_i, \bfx_{-i}) = D_i(y_i,\bfx_{-i})+P_i(y_i, \bfx_{-i}) = D_i(\bfx) + P_i(y_i, \bfx_{-i})$. Consequently we have $P(\bfx)-P(y_i,\bfx_{-i}) = P_i(\bfx)-P_i(y_i,\bfx_{-i})$, and thus
 \begin{equation}
\label{eq:phy_diff}
 \Phi(x) - \Phi(x_{-i}, y_i)  = b_i\left[(x_i-s(x_i))^2-(y_i-s(y_i))^2\right]+P_i(x)- P_i(x_{-i},y_i).
\end{equation}
By \eqref{eq:cost_diff} and \eqref{eq:phy_diff} it follows that if $c_i(\bfx) - c_i(y_i, \bfx_{-i}) > 0$, then $\Phi(\bfx) - \Phi(y_i, \bfx_{-i}) > 0$, and thus $\Phi$ is a generalized ordinal potential function, as desired.
 \end{proof}
Observe that from an initial opinion profile $x^0 = \{x_0^0, \ldots, x_n^0\}$, the opinion dynamics with asynchronous updates may converge to different stable states, depending on the order in which agents are chosen for updating their opinions. For example, let $G$ be a three-player social network, let
$V = \{0, 1, 2\}$, $E = \{(0,1),(1,2),(0,2)\}$, $x^0 = \{-1, 0.5, 0\}$, $\Theta = \{-1, -0.5, 0, 0.5, 1\}$, $b = 1$, and $\lambda = 0.5$. Recall that the opinion of player $i$ at step $t$ is
$$x_i^t = \argmin_{x}{\{b(x-s(x_i^{t-1}))^2 + \sum_{j:(i,j) \in E}{w_{i,j}(x-x_j^{t-1})^2}\}}.$$
Thus, if players' selection order is $\{\text{Player }0$, $\text{Player } 2$, $\text{Player } 1$, $\text{Player } 0$, $\text{Player } 1$, $\text{Player } 2\}$, then the dynamics reach the stable opinion profile $x^{eq} = \{-0.5, -0.5, -0.5\}$; if, instead, the selection order is $\{\text{Player } 2$, $\text{Player } 1$, $\text{Player } 0$, $\text{Player } 0, \text{Player } 1$, $\text{Player } 2\}$, then the stable opinion profile is $x^{eq} = \{-1, -1, -1\}$.
%

We are also able to bound the time that the opinion dynamics takes to converge to a stable state.
Specifically, we observe that convergence time can in general be exponential in the number of agents, as stated by the next theorem.
\begin{theorem}
\label{exp_step}
There is a social network $G=(V, E)$ with $\lvert V\rvert = n$ and an opinion set $\Theta$ such that the corresponding opinion dynamics with asynchronous updates takes an number of steps to converge to a stable state that is exponential in $n$.
\end{theorem}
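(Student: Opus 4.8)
The plan is to realize a \emph{binary counter} inside the opinion dynamics. For every $k$ I would build a network on $n=\Theta(k)$ vertices together with an opinion set $\Theta$, edge weights, a social-media strength $b$, and an initial profile $\bfx^0$, such that there is an update order under which the asynchronous dynamics performs $\Omega(2^{\,n/c})$ opinion changes (for a constant $c$) before reaching a stable profile. Each of the $k$ ``bits'' of the counter is a constant-size sub-gadget, and a bit being ``$0$'' or ``$1$'' corresponds to its vertices holding one of two prescribed opinion patterns. The chosen schedule realizes the operation ``increment the counter'', implemented as a block of best-response updates; when the new value is a power of two, this block contains a full ``reset'' of the $k-1$ lower bits. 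This yields the recurrence
\[
T(k)\ \ge\ 2\,T(k-1)+1,
\]
hence $T(k)=\Omega(2^k)$, which is exponential in $n$.

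The mechanics rely on the exact best-response rule. For a general graph, exactly as derived in the proof of Lemma~\ref{lem:boundOnAvg}, the best response of agent $i$ in a profile $\bfx$ is the element of $\Theta$ closest to the weighted average $\mu_i(\bfx)=\frac{b\,s(x_i)+\sum_{j:(i,j)\in E}w_{ij}x_j}{\,b+\sum_{j:(i,j)\in E}w_{ij}\,}$, with ties broken towards the moderate value. I would give the edges joining bit $k$'s gadget to bit $k{-}1$'s gadget weights far larger than the total weight incident to bits $1,\dots,k{-}1$ (e.g.\ geometrically growing in the bit index), so that a high bit's opinion dominates the best responses of the lower bits --- a flip of bit $k$ then forces a prescribed common target on bits $1,\dots,k{-}1$, i.e.\ the reset. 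At the same time a dedicated low-weight ``carry'' edge lets the pattern that bits $1,\dots,k{-}1$ reach at the end of a round nudge $\mu_k(\bfx)$ just across a rounding boundary of $\Theta$, so that bit $k$'s \emph{unique} best response flips at exactly the right moment and not before. Decoupling the ``reset'' channel from the ``carry'' channel is precisely why each bit needs a few vertices rather than one. The social-media term $b(x_i-s(x_i))^2$ is handled either by taking $b$ small enough to be dominated by all relevant edge weights (so it never changes a best response), or by using it as the built-in ``anchor'' of the two endpoint bits --- with $\lambda=\tfrac12$ and a suitably small discretization $\delta$ --- in place of a Friedkin--Johnsen self-weight.

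With the gadget fixed, the remainder is a bookkeeping argument: one checks that (i) every update in the schedule is the unique best response of the selected agent in the current profile (this is the family of nearest-point inequalities for $\mu_i(\bfx)$, across all phases of all bits, that the weight choice must satisfy); (ii) each such update actually changes an opinion, so that --- the game being a generalized ordinal potential game by Theorem~\ref{thm:stable} --- the potential $\Phi$ strictly decreases and the run is honest and finite; and (iii) the schedule traverses the counter values $0,1,\dots,2^k-1$ in order, reaching a stable profile only at the end, giving the claimed number of steps. Note we only need \emph{one} bad schedule, so there is no need to reason about alternative orders.

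The main obstacle is the design and verification in the second paragraph: one must tune the gap between consecutive bit weights, the discretization $\delta$, and the ``carry'' weights so that the carry signal produced by the lower bits is strong enough to tip bit $k$ past a grid point of $\Theta$ yet too weak to tip it prematurely, and so that this is robust to the simultaneous contribution of \emph{all} lower bits and to rounding $\mu_i$ onto $\Theta$; making the ``reset'' and ``carry'' effects coexist is exactly what forces a constant number of vertices per bit. The geometric-weight version is the cleanest to write down; with additional care the weights can be made polynomially bounded, as in analogous exponential lower bounds for local-search dynamics in cut and congestion games.
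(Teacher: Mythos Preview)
Your high-level scheme --- chain $\Theta(k)$ constant-size gadgets with geometrically scaled weights so that the dynamics visits $2^{\Omega(k)}$ configurations before stabilizing --- is exactly the paper's. The paper fixes $\delta=\tfrac12$, writes down an explicit six-vertex gadget with a verified two-step ``switch-on'' and ten-step ``switch-off'' best-response sequence, chains $n$ copies with $\epsilon_i>9\epsilon_{i+1}$ and $b<\epsilon_n$, and observes that one gadget's switch-off makes its $A$-vertex oscillate $0\to\tfrac12\to0\to\tfrac12\to0$, which triggers two full on/off cycles of the next gadget down the chain; this gives $2^{n-1}$ cycles at level $n$.

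The real difference is the \emph{direction of control}. The paper's cascade is purely top-down: an external vertex $A_0$ switches $G_1$ once, $G_1$'s oscillation drives $G_2$, and so on; no information ever flows back up, so there is no carry to engineer and the single inequality $\epsilon_i>9\epsilon_{i+1}$ already decouples adjacent levels. Your binary counter needs both directions --- heavy reset edges going down and a weak carry edge going up --- and the carry is precisely the obstacle you flag but do not resolve: with geometric weights the carry's contribution to $\mu_k$ is exponentially small in $k$, so making it decisive forces $\mu_k$ (absent the carry) to sit exponentially close to a grid boundary of $\Theta$, and this constraint must hold simultaneously for every $k$ while remaining robust to the lower bits' states. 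That is not obviously impossible in this model, but you have not shown it, and the paper's one-directional construction sidesteps the issue completely. If you want to finish your route cleanly, drop the carry and let the top gadget drive everything --- that is what the paper does.
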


Nevertheless, through the analysis of the generalized potential function defined in Theorem~\ref{thm:stable}, we can determine polynomial upper bounds on the number of steps needed to converge to a stable state, whenever the weights of edges and the social media influence have bounded precision $k$, i.e. they can be represented with at most $k$ digits after the decimal point.
\begin{theorem}
\label{conv_bound}
Given a social network $G = (V,E)$ with $|V| = n$ and an opinion set $\Theta$ with discretization factor $\delta$, if both the strenght $b$ of social media influence and the weights $(w_e)_{e \in E}$ of the edges have bounded precision $k$, then the opinion dynamics with asynchronous updates converges to a Nash Equilibrium in $O(10^k4b\frac{n}{\delta^2}+10^k4w_{max}\frac{n^2}{\delta^2})$, where $w_{\max} = \max_{e \in 	E} w_e$.
\end{theorem}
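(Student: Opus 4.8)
The plan is to upgrade the qualitative convergence result of Theorem~\ref{thm:stable} into a quantitative bound by re‑using the generalized ordinal potential $\Phi(\bfx)=\sum_{i\in V} b\,(x_i-s(x_i))^2+\sum_{(i,j)\in E} w_{ij}(x_i-x_j)^2$ constructed there. The argument has the classical three‑part shape: (i) an \emph{a priori} upper bound on $\Phi$ over all profiles; (ii) the trivial lower bound $\Phi\ge 0$; and (iii) a strictly positive lower bound on the amount by which $\Phi$ decreases whenever an asynchronous update actually changes some agent's opinion. Items (i) and (ii) bound the total ``budget'' of decrease, item (iii) bounds the decrease per effective move, and the ratio of the two bounds the number of effective moves; after that many moves no agent can improve, so the profile is a Nash equilibrium.

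For (i): since $s(x)\in\{-1,0,1\}$ and $\lvert x\rvert\le 1$, we have $(x-s(x))^2\le 4$ (in fact $\le\tfrac14$ thanks to $\lambda=\tfrac12$), and $(x_i-x_j)^2\le 4$ for all $x_i,x_j\in[-1,1]$. Hence $\Phi(\bfx)=O\!\big(bn+w_{\max}\lvert E\rvert\big)=O\!\big(bn+w_{\max}n^2\big)$ for every profile $\bfx$, using $\lvert E\rvert\le n^2$. For (iii): with the precision hypothesis, $\Phi$ takes values only on a discrete grid. Every opinion in $\Theta$ is an integer multiple of $\delta$, and $s$ maps into $\{-1,0,1\}\subseteq\Theta$, so $x_i-s(x_i)$ and $x_i-x_j$ are integer multiples of $\delta$ and their squares are integer multiples of $\delta^2$; moreover $b$ and every $w_{ij}$, having at most $k$ digits after the decimal point, are integer multiples of $10^{-k}$. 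Therefore $\Phi(\bfx)\in 10^{-k}\delta^2\cdot\mathbb{Z}$ for every $\bfx$. Now, by the proof of Theorem~\ref{thm:stable}, whenever the updating agent strictly lowers its own cost it strictly lowers $\Phi$; and in the asynchronous dynamics an agent changes opinion exactly when its best response differs from its current one, i.e.\ exactly when its cost strictly drops (if the current opinion were already a minimizer of $c_i(\cdot,\bfx_{-i})$, the agent would not move). Combining, every opinion‑changing step decreases $\Phi$ by at least $10^{-k}\delta^2$.

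Putting (i)--(iii) together, the number of opinion‑changing steps is at most
$\dfrac{O\!\big(bn+w_{\max}n^2\big)}{10^{-k}\delta^2}=O\!\Big(10^{k}\tfrac{bn}{\delta^2}+10^{k}\tfrac{w_{\max}n^2}{\delta^2}\Big)$,
which is the claimed bound. A step in which the selected agent is already best‑responding leaves the profile unchanged and can therefore be disregarded (equivalently, assume the scheduler picks an improving agent whenever one exists); once no opinion‑changing step is possible, every agent is best‑responding, i.e.\ the profile is a Nash equilibrium.

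The routine parts are the two size bounds on $\Phi$. The genuinely delicate point is (iii): it is where \emph{both} hypotheses of the theorem are used — the precision $k$ of $b$ and of all edge weights, together with the discretization $\delta$ of $\Theta$ — via the two structural facts that $s$ maps into $\Theta$ and that $\Theta$ is an arithmetic progression of common difference $\delta$ containing $-1,0,1$. If the value set of $s$ did not lie on the $\delta$‑grid, or the weights were not of bounded precision, this grid argument — and hence the polynomial step bound — would fail, and one would only recover the qualitative statement of Theorem~\ref{thm:stable}.
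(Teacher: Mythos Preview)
Your proof is correct and follows essentially the same approach as the paper: both use the generalized ordinal potential $\Phi$ from Theorem~\ref{thm:stable}, bound it above by $O(bn+w_{\max}n^2)$, and argue that each improving step decreases $\Phi$ by at least the grid spacing $10^{-k}\delta^2$. The only cosmetic difference is that the paper factors the argument into an integer case (Proposition~\ref{prop:conv_int}) followed by a $10^k$-scaling via best-response equivalence, whereas you observe directly that $\Phi$ takes values in $10^{-k}\delta^2\cdot\mathbb{Z}$; these are the same argument organized slightly differently.
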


The proofs of Theorem~\ref{exp_step} and Theorem~\ref{conv_bound} resemble the ones used for proving similar results by \cite{goldberg}. Anyway, we include them in Appendix~\ref{apx:conv_proofs} for sake of completeness.

\subsection{Impact on Consensus}
\label{async_res}
We will now focus on the behavior of the opinion dynamics with asynchronous updates with respect to convergence to consensus. We will give experimental evidence that, even in this case, the behavior of the dynamics resembles the one observed in the experiments shown in section~\ref{sync_experiments}, and hence it reflects the theoretical findings obtained through the analysis of the synchronous case (section~\ref{sync}). Specifically, we performed simulations in the very same settings described in section~\ref{sync_experiments}. Therefore, we run experiments on stochastic two-block model graphs, random graphs and on real graphs. As observed in the synchronous case, we observe that the trend of the consensus probability does not change by varying the model's parameters. We omit these results for the seek of readability, since they are very similar to the ones showed in section~\ref{sync_experiments}. Moreover, we compute weights, homophily ratio, social media influence, consensus probability and confidence interval exactly as discussed in section~\ref{sync_experiments}. Unlike the synchronous case, we have to define how, at each time step, we determine the agent that can update her opinion. Specifically, let $\mathcal{A}_t$ the set of agents at step $t$ for which the best-response is to change their current opinion, the update policy is: at each round $t$ we sample an agent from $\mathcal{A}_t$ uniformly at random.

From Figure~\ref{async_fig1} we observe that the probability of reaching a consensus is the same when $\tb$ and $h$ are unchanged, even if we change the value of $b, \aout, \ain$, exactly as in the case of synchronous case.

\begin{figure}[htb]
\begin{subfigure}[c]{0.49\linewidth}
\includegraphics[scale=0.4]{/async_DifferentValueSameRatio/BlockModel}
\caption{Two Block Model}
\label{figAsync1a}
\end{subfigure}%
\begin{subfigure}[c]{0.49\linewidth}
\includegraphics[scale=0.4]{/async_DifferentValueSameRatio/facebook}
\caption{\texttt{ego-Facebook}}
\label{figAsync1b}
\end{subfigure}
 \caption{Figures show the trend of consensus probability in the same setting as Figure~\ref{fig1}, except that the update of opinions is asynchronous.\\
The confidence intervals are shown as error bars. Similar results have been obtained for the other network models.}
\label{async_fig1}
\end{figure}
The results showed in Figure~\ref{async_fig2} highlight that the findings obtained in the section~\ref{sync} on the dependence between the relative amount of social media influence $\tb$, the homophily ratio $h$, and the consensus probability hold also for the asynchronous case.
\begin{figure}[htb]
 \begin{subfigure}{0.49\textwidth}
\centering
\includegraphics[scale=0.415]{/async_ConsensusProbabilityCompAlpha/BlockModel}
\caption{Two Block Model}
\label{async_fig2a}
\end{subfigure}
\hfill
\begin{subfigure}{0.49\textwidth}
\centering
\includegraphics[scale=0.415]{/async_ConsensusProbabilityCompAlpha/facebook}
\caption{\texttt{ego-Facebook}}
\label{async_fig2b}
\end{subfigure}
 \caption{Figures show the trend of consensus probability in the same setting as Figure~\ref{fig2}, except that the update of opinions is asynchronous.\\
The confidence intervals are shown as error bars. Similar results have been obtained for the other network models.}
\label{async_fig2}
\end{figure}
As discussed in section~\ref{sync_experiments}, we observe that convergence to consensus is affected by the initial opinions of agents. Specifically, we observe that the consensus probability decreases as the average opinion of agents in a component goes to 1, with a sharp change of probability around $\lambda = 0.5$. We run experiments in which we impose initial opinions to be larger or smaller than $\lambda$. We observe that the decrement of the consensus probability occurs as soon as the value of $\tb$ is around the thresholds $\tau_1$ and $\tau_2$, as indicated by findings obtained in section~\ref{sync}. The results are very similar to the ones showed in Figure~\ref{fig3}, Figure~\ref{fig4} and Figure~\ref{fig5}, hence for the sake of the brevity we omit them.

Moreover, we consider the case in which initial opinions do not diverge. The results confirms that the smaller is the homophily ratio the smaller $\tb$ need to be to make the probability of consensus large by allowing consensus on an opinion different from $-1, 0, 1$, as proved in Theorem~\ref{thm:conv_small}. As observed in section~\ref{sync_experiments} for real dataset this behaviour is not confirmed. Further, the results highlight a difference between experimental and theoretical results. Specifically, the latter ones show that the threshold should depend on the homphily ratio $h$, but this dependence does not appear in experiments. We leave open the problem of investigating about the reasons behind these discrepancies. The results obtained are very similar to the ones shown in Figure~\ref{fig6}, hence we omit them.

\section{Conclusions}
In this work we analyzed the impact of social media recommendations on opinion formation processes, when opinions may assume only discrete values, as is the case of several electoral settings. We focused mainly on how and how much the social media may influence the likelihood that agents reach a consensus.
Clearly, it would be interesting also to deepen our analysis by evaluating how the social media can influence, not only the probability of consensus, but also the kind of equilibria that can be reached by the opinion formation process.

In this work we focused on a classical opinion formation model.
However, we believe that it would be undoubtedly interesting to analyze whether our results extend to more complex (but more realistic) opinion formation models.

In our analysis, we restrict the opinion space of the social media $\Omega$ to only three values. To consider an higher cardinality of $\Omega$ would be clearly of interest, but we will expect that such an analysis will give results very similar in spirit to the one proved in our work (but with an explosion of possible cases). Similar considerations can be done about extending our mono-dimensional representation of opinions to higher dimensional representations.

Even if, our experimental results highlight a large adherence to the theoretical findings obtained for the symmetric two-block model, some small differences exist among the results for different network structures (mainly, in the case that initial opinions are convergent). It would be then interesting to understand whether and how these difference may be motivated through a detailed study of the relationship among the impact of the social influence and the structural and topological properties of the social network.

\pagebreak

\appendix

\section{Missing Proofs from Section~\ref{preliminary}}
\label{apx:pre_res}
\begin{proof}[Proof of Lemma~\ref{lem:close_zero}]
We consider only the case that $x_i^t < -\lambda$. The case for $x_i^t > \lambda$ is symmetric and hence omitted.

Recall that, by Lemma~\ref{lem:TwoPlayerModel}, $x_i^t = x_P^t$, where $P$ is the block at which $i$ belongs, and every $j \notin P$ has $x_j^t = x_{\notP}^t$.
Since $x_i^t < -\lambda$, then $s(x_P^t) = 1$. Thus, by Lemma \ref{lem:boundOnAvg}, $x_i^{t+1}\geq 0$ only if $\frac{-b+\aout x_{\notP}^t+\ain x_P^t}{b+\ain+\aout} \geq \frac{\delta}{2}$.
By dividing both sides by $\aout$ and recalling that $\frac{b}{\aout} = \tb$ and $\frac{\ain}{\aout} = h$, we have that $x_i^{t+1}\geq 0$ only if 
\begin{equation}
\label{eq:condlem6}
 \tb \leq \frac{x_{\notP}^t-\frac{\delta}{2}+h(x_P^t-\frac{\delta}{2})}{\frac{\delta}{2}+1}.
\end{equation}
It is immediate to check that the r.h.s. of \eqref{eq:condlem6} is maximized by taking $x_P^t = -\lambda-\delta$, and $x_{\notP}^t = 1$. By substituting these values in \eqref{eq:condlem6}, we 
achieve that 
$x_i^{t+1}$ can be greater than $0$ only if $\tb \leq \frac{2-\delta-(2\lambda+3\delta)h}{2+\delta}$, as desired.
\end{proof}

\begin{proof}[Proof of Lemma~\ref{lem:zero}]
We consider only the case that $x_i^t \in [-\lambda, 0)$. The case for $x_i^t \in (0, \lambda]$ is symmetric and hence omitted.

Recall that, by Lemma~\ref{lem:TwoPlayerModel}, $x_i^t = x_P^t$, where $P$ is the block at which $i$ belongs, and every $j \notin P$ has $x_j^t = x_{\notP}^t$.
Since $x_i^t \in [-\lambda, 0)$, then $s(x_P^t) = 0$. Thus, by Lemma \ref{lem:boundOnAvg}, $x_i^{t+1}\geq 0$ only if $\frac{\aout x_{\notP}^t+\ain x_P^t}{b+\ain+\aout} \geq -\frac{\delta}{2}$.
By dividing both sides by $\aout$ and recalling that $\frac{b}{\aout} = \tb$ and $\frac{\ain}{\aout} = h$, we have that $x_i^{t+1}\geq 0$ only if 
\begin{equation}
\label{eq:condlem7}
 \tb \geq -\frac{2}{\delta}\left( (x_{\notP}^t+\frac{\delta}{2})+h(x_P^t+\frac{\delta}{2})\right).
\end{equation}
It is immediate to check that the r.h.s. of \eqref{eq:condlem7} is minimized by choosing $x_P^t = -\delta$, and for $x_{\notP}^t = 1$. By substituting these values in \eqref{eq:condlem7}, we achieve that 
$x_i^{t+1}\geq 0$ only if $\tb \geq h - \left(\frac{2}{\delta} + 1\right)$, as desired.
\end{proof}

\begin{proof}[Proof of Lemma~\ref{lem:far_zero}]
We consider only the case that $x_i^t < -\lambda$. The case for $x_i^t > \lambda$ is symmetric and hence omitted.

Recall that, by Lemma~\ref{lem:TwoPlayerModel}, $x_i^t = x_P^t$, where $P$ is the block at which $i$ belongs, and every $j \notin P$ has $x_j^t = x_{\notP}^t$.
Since $x_i^t < -\lambda$, then $s(x_P^t) = 1$. Thus, by Lemma \ref{lem:boundOnAvg}, $x_i^{t+1}\geq 0$ only if $\frac{-b+\aout x_{\notP}^t+\ain x_P^t}{b+\ain+\aout} \geq -\frac{\delta}{2}$.
By dividing both sides by $\aout$ and recalling that $\frac{b}{\aout} = \tb$ and $\frac{\ain}{\aout} = h$, we have that $x_i^{t+1}\geq 0$ only if 
\begin{equation}
\label{eq:condlem8}
 \tb \leq \frac{x_{\notP}^t+\frac{\delta}{2}+h(x_P^t+\frac{\delta}{2})}{1-\frac{\delta}{2}}.
\end{equation}
It is immediate to check that the r.h.s. of \eqref{eq:condlem8} is minimized by taking $x_P^t = -\lambda-\delta$, and $x_{\notP}^t = 1$. By substituting these values in \eqref{eq:condlem8}, we achieve that 
$x_i^{t+1}$ can be at least $0$ only if $\tb \leq \frac{2+\delta-(2\lambda+\delta)h}{2-\delta}$, as desired.
\end{proof}

\section{Missing Proofs from Section~\ref{sync_results}}
\label{apx:sec3}
\begin{proof}[Proof of Theorem~\ref{thm:div_large_one}]
 Suppose w.l.o.g. that $x_L^0 < -\lambda $, and $x_R^0 \geq 0$.
 Suppose first that $\tb > \tau^*(h)$. Since $\tau^*(h) \geq \tau_1(h)$, then, by Lemma~\ref{lem:far}, no agent $i \in L$ can take an opinion $x_L^1 \geq -\lambda$. Moreover, since $\tau^*(h) \geq \tau_2(h)$, then, by Lemma~\ref{lem:close}, no agent $j \in R$ can take an opinion $x_R^1 < 0$. Hence, after the first time step we still have $x_L^1 < -\lambda$ and $x_R^1 \geq 0$. Then, we can iteratively apply the same argument above to conclude that the opinions of the two blocks never converge to a consensus profile.
 
 As for the remaining cases, the argument is exactly the same as discussed in the proof of Theorem~\ref{thm:div_large_both}.
\end{proof}

\begin{proof}[Proof of Theorem~\ref{thm:div_small}]
Suppose that $\tb > \max\{\tau_2(h), \tau_3(h)\}$. W.l.o.g., suppose that $x_L^0 \leq 0 \text{ and } x_R^0 \geq 0$. Then by Lemma~\ref{lem:close} and Lemma~\ref{lem:close_zero}, no agent $i \in L$ can take an opinion $x_L^1 > 0$  and no agent $j \in R$ can take an opinion such that $x_R^1 < 0$. Hence, after the first step $x_L^1 \leq 0 \text{ and } x_R^1 \geq 0$. Then, we can iteratively apply the same argument to conclude that only consensus on $0$ is possible.

For the case where $0 < \tb \leq \max{0,\tau_4(h)}$, the argument is exactly the same as discussed in the proof of Theorem~\ref{thm:div_large_both}.
\end{proof}

\section{Missing Proofs from Section~\ref{convergence}}
\label{apx:conv_proofs}
\subsection{Proof of Theorem~\ref{exp_step}}
\begin{proof}[Proof of Theorem~\ref{exp_step}]
In the following construction we assume $\delta = 0.5$.

A \emph{6-gadget $G$} is a graph $(V,E)$, with $\vert V \vert = 6$, $V = \{A, B, C, D, E, F\}$, with edges $(A,B)$, $(B,C)$ and $(C,D)$ having weights ${\epsilon, 2\epsilon, 3\epsilon}$ respectively, and edges $(D.E)$, $(B,F)$ and $(D,F)$ all weighting $4\epsilon$, for some $\epsilon > 0$. The value of the social media's influence is $b = \epsilon' < \epsilon$.

Let $A_0$ a further player with edges $(A_0, B)$ and $(A_0, D)$ of weight $4\epsilon$. $A_0$ allows $G$ to switch between vectors $(0, 0, 0, 0, 0, 1/2)$ and $(0, 1/2, 0, 1/2, 0, 1/2)$.
If the players $\{A, B, C, D, E, F\}$ have opinions $(0, 0, 0, 0, 0, 1/2)$ and $A_0$ is set to $1$, then we can have the following best-response sequence, that will be referred as \emph{switch-on cycle}:
\begin{equation*}
(0, 0, 0, 0, 0, 1/2) \rightarrow (0, 1/2, 0, 0, 0, 1/2) \rightarrow (0, 1/2, 0, 1/2, 0, 1/2)
\end{equation*}
On the other hand, if the players $\{A, B, C, D, E, F\}$  have opinions $(0, 1/2, 0, 1/2, 0, 1/2)$ and $A_0$ is set to $0$, then we can have the following best-response sequence that will be referred as \emph{switch-off cycle}:
\begin{multline*}
(0, 1/2, 0, 1/2, 0, 1/2) \rightarrow (1/2, 1/2, 0, 1/2, 0, 1/2) \rightarrow (1/2, 0, 0, 1/2, 0, 1/2) \\ \rightarrow (0, 0, 0, 1/2, 0, 1/2) \rightarrow (0,0,1/2,1/2,0,1/2) \rightarrow (0,1/2,1/2,1/2,0,1/2) \\ \rightarrow (1/2,1/2,1/2,1/2,0,1/2) \rightarrow (1/2,1/2,1/2,0,0,1/2) \rightarrow (1/2,1/2,0,0,0,1/2) \\ \rightarrow (1/2,0,0,0,0,1/2) \rightarrow (0,0,0,0,0,1/2)
\end{multline*}
During the switch-on cycle the opinion of $A$ does not change, while it follows the sequence $0 \rightarrow 1/2 \rightarrow 0 \rightarrow 1/2 \rightarrow 0$ in the switch-off cycle.

We now define an opinion game characterized by an exponentially large difference between the largest and the smallest edge's weight.
Consider an $n$ 6-gadgets $G_i$ with players $\{A_i, B_i, C_i, D_i, E_i, F_i\}$, with $i = 1, \ldots, n$. The edge' weights are parametrized by $\epsilon_i$, with $\epsilon_i < \epsilon_{i-1}, \forall i=1, \ldots, n$. Let the social media's influence $b < \epsilon_n$.
For each $i$ we connect $G_i$ with $G_{i-1}$ by having $A_{i-1}$ acting as a switch for $G_i$. We finally add the switch player $A_0$ for $G_1$. Consequently, the total number of players is $6n+1$.
We set the edges' weight to make the behaviour of $A_i$ not influenced by the edges $(A_i, B_{i+1})$ and $(A_i, D_{i+1})$ so that the opinion of $A_i$ always follows the opinion of $B_i$. Basically, we want that, whatever is the opinion of $B_{i+1} \text{ and } D_{i+1}$, the best-response for player $A_i$ always is the opinion of $B_i$. Specifically, the needed condition is $\epsilon_i > 8\epsilon_{i+1}+b$. Hence, it is sufficient to set $\epsilon_i > 9\epsilon_{i+1}$, since $\epsilon_{i+1} \geq \epsilon_n > b$. Therefore, the largest edge's weight is $4\epsilon_1$ and the smallest one is $\epsilon_n$ and their ratio is grater than $4\cdot9^{n-1}$.

Consider now the following initial opinion profile: players $B_1$ and $D_1$ have opinion $1/2$, $F_i = 1/2, \forall i=1,\ldots, n$, all the renaming players have opinion $0$. Basically. $G_1$ is the starting configuration of a switch off cycle.
Since $A_i$, for $i = 1, \ldots, n$, act as a switch, when her opinion switches from $0$ to $1/2$ ($1/2$ to $0$), a switch-off-cycle (switch-on) is executed on $G_{i+1}$. Note that the last two cases occur two times during the switch-off cycle of $G_i$.
An exponentially long opinion dynamics can start by switching-off $G_1$. Hence, $G_2$  goes through 2 switch-on cycles and 2 switch-off cycles, $G_3$ goes through 4 switch-on cycles and 4 switch-off cycles. Consequently, $G_n$ goes through $2^{n-1}$ switch-on cycles and $2^{n-1}$ switch-off cycles.
\end{proof}

\subsection{Proof of Theorem~\ref{conv_bound}}
In order to prove Theorem~\ref{conv_bound}, let us first consider the simpler case in which both the strength of social media's influence ($b$) and the weight of the edges are integer.
\begin{proposition}
\label{prop:conv_int}
Given a social network $G = (V,E)$ with $\|V\| = n$ and an opinion set $\Theta$ with discretization factor $\delta$, if both the strength $b$ of social media influence and the weights $(w_e)_{e \in E}$ of the edges are integers, then the opinion dynamics with asynchronous updates converges to a Nash Equilibrium in $O(4b\frac{n}{\delta^2}+4w_{max}\frac{n^2}{\delta^2})$, where $w_{\max} = \max_{e \in 	E} w_e$.
\end{proposition}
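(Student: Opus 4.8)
The plan is to reuse the generalized ordinal potential $\Phi(\bfx)=\sum_{i\in V} b\,(x_i-s(x_i))^2+\sum_{(i,j)\in E} w_{ij}(x_i-x_j)^2$ introduced in the proof of Theorem~\ref{thm:stable}, and to turn the qualitative fact ``every opinion change strictly decreases $\Phi$'' into a quantitative one. Concretely, I would bound separately (i) the total range $\Phi_{\max}-\Phi_{\min}$ of $\Phi$ over all opinion profiles, and (ii) the minimum amount $\Delta$ by which a single improving best-response move decreases $\Phi$; the number of steps in which some agent actually changes her opinion is then at most $(\Phi_{\max}-\Phi_{\min})/\Delta$. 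A step in which the selected agent is already best-responding leaves the profile unchanged and can be disregarded.

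For (i): since every opinion lies in $[-1,1]$ and $s(\cdot)\in\{-1,0,1\}$, we have $(x_i-s(x_i))^2\le 4$ and $(x_i-x_j)^2\le 4$ for all $i,j$, hence $0\le\Phi(\bfx)\le 4bn+4w_{\max}|E|\le 4bn+4w_{\max}n^2$ for every profile $\bfx$. Thus $\Phi_{\max}-\Phi_{\min}=O(bn+w_{\max}n^2)$; this part is routine.

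For (ii): recall from the proof of Theorem~\ref{thm:stable} that whenever agent $i$ switches to a strictly better response $y_i$, one has $\Phi(\bfx)-\Phi(y_i,\bfx_{-i})>0$. The key observation — and the only place where the integrality hypothesis enters — is that every element of $\Theta$ (and in particular every value of $s$, since $\{-1,0,1\}\subseteq\Theta$) is an integer multiple of $\delta$; therefore each squared difference occurring in $\Phi$ is an integer multiple of $\delta^2$, and since $b$ and all $w_e$ are integers, $\Phi(\bfx)\in\delta^2\mathbb{Z}$ for every profile $\bfx$. Consequently any two distinct values of $\Phi$ differ by at least $\delta^2$, so $\Delta\ge\delta^2$, i.e. each opinion change decreases $\Phi$ by at least $\delta^2$.

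Combining the two bounds, the dynamics performs at most $(4bn+4w_{\max}n^2)/\delta^2=O\!\left(4b\tfrac{n}{\delta^2}+4w_{\max}\tfrac{n^2}{\delta^2}\right)$ opinion-changing steps before reaching a profile that no agent wants to leave, i.e. a Nash equilibrium. The routine ingredients are the range estimate and the potential-difference identity (already available from Theorem~\ref{thm:stable}); the one point that requires care — and is the genuine content of the proposition — is (ii), namely that integer weights together with the $\delta$-granularity of $\Theta$ force $\Phi$ onto the lattice $\delta^2\mathbb{Z}$, which is exactly what makes the per-step decrease bounded away from zero.
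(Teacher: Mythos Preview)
Your proof is correct and follows essentially the same strategy as the paper: bound $\Phi$ from above by $4bn+4w_{\max}n^2$ and use integrality of $b,w_e$ together with the $\delta$-granularity of $\Theta$ to conclude that each improving step lowers $\Phi$ by at least $\delta^2$. The only (minor) difference is that you observe directly that $\Phi(\bfx)\in\delta^2\mathbb{Z}$ for every profile, whereas the paper shows the \emph{cost} difference $c_i(\bfx)-c_i(y_i,\bfx_{-i})$ lies in $\delta^2\mathbb{Z}$ and then invokes the inequality $\Phi(\bfx)-\Phi(y_i,\bfx_{-i})\ge c_i(\bfx)-c_i(y_i,\bfx_{-i})$ from Theorem~\ref{thm:stable}; your route is a touch more direct but the content is identical.
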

\begin{proof}
In the proof of Theorem \ref{thm:stable} we proved that $\Phi(x) = \sum_{i \in V} b(x_i - s(x_i))^2 + \sum_{(i,j) \in E} w_{i,j}(x_i-x_j)^2$ is a generalized ordinal potential function for the opinion dynamics game proposed. Trivially, we have $\Phi(x) \leq 4bn+4w_{max}n^2$, where $n$ is the number of player and it is equal to $\vert V \vert$ and $w_{max} = \max_{e \in E} w_e$.\\
Moreover, in the proof of Theorem \ref{thm:stable} we also proved that
\begin{equation*}
\label{diff_in_Phi}
\Phi(x) - \Phi(y, x_{-i}) = b[(x_i-s(x_i))^2-(y_i-s(y_i))^2]+\Delta P
\end{equation*}
where $\Delta P = P(x) - P(y_i, x_{-i}) = P(x) - P(y_i, x_{-i})$.\\
If a player $i$ change its opinion at step $t$, the following inequalities it is satisfied:
\begin{equation*}
c_i(x^t) - c_i(x_{-i}^t, y_i) = b(x_i^t - s(x_i^t))^2 - b(y_i - s(x_i^t))^2+\Delta P > 0
\end{equation*}
where $y_i$ is the best-response of player $i$ at step $t$.\\
Let $\Delta B = b(x_i^t - s(x_i^t))^2 - b(y_i - s(x_i^t))^2$, $(x_i^t - s(x_i^t)) = k\delta$ with $k \in \mathbb{N}$, therefore, knowing that $b \in \mathbb{N}$, it's easy to see why $b(x_i^t - s(x_i^t))^2 = k'\delta^2$, with $k' \in \mathbb{N}$. Trivially, we have  the same only for $b(y_i - s(x_i^t))^2$, hence $\Delta B = K\delta^2$, with $K \in \mathbb{Z}$.
In a very similar way, knowing that $w_e \in \mathbb{N}, \forall e \in E$, we can show that $\Delta P = K'\delta^2$, with $K \in \mathbb{Z}$.\\
Consequently, $\Phi(x^t) - \Phi(y, x_{-i}^t) \geq c_i(x^t) - c_i(x_{-i}^t, y_i) = K'\delta^2 \geq \delta^2$.\\
Basically, If a player change it's opinion, the generalized potential function decreases at least by $\delta^2$. Therefore the number of steep needed . 
\end{proof}

We now extend the bound for the convergence of the opinion dynamics to games whose influences have bounded precision $k$. To this aim, let us first define the concept of \emph{best-response equivalent} \citep{goldberg, dyer}.
\begin{definition}
Two best-response games $B, B'$ are best-response equivalent if they have the same set of players and strategies, and for any player and strategy profile, the best-response of that player is the same in $B$ as in $B'$.
\end{definition}
Let $B_{OP}$ the best-response game described in section \ref{secTheModel}, $B'_{OP}$ is exactly defined as $B_{OP}$ but for any player $i$ the cost is $c'_i(\cdot) = 10^kc_i(\cdot)$, where $c_i(\cdot)$ is the cost for the player $i$ in $B_{OP}$.
\begin{observation}
\label{br_eq_ob}
$B_{OP}$ and $B'_{OP}$ are best-response equivalent. $\Phi'(\cdot) = 10^k\Phi(\cdot)$ is the generalized ordinal potential function for $B'_{OP}$, where  $\Phi(\cdot)$ is the generalized ordinal potential function for $B_{OP}$.
\end{observation}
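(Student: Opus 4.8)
The plan is to observe that $B'_{OP}$ is obtained from $B_{OP}$ by scaling every cost function by the \emph{same} strictly positive constant $10^k$, and that neither the notion of best response nor the defining property of a generalized ordinal potential is affected by such a rescaling. So the whole statement reduces to two one-line verifications, both powered by $10^k > 0$.

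First I would prove best-response equivalence. Fix a player $i$ and a strategy profile $\bfx$. By construction $c'_i(y,\bfx_{-i}) = 10^k\, c_i(y,\bfx_{-i})$ for every $y \in \Theta$, and since $10^k > 0$ the map $t \mapsto 10^k t$ is strictly increasing, hence $\argmin_{y \in \Theta} c'_i(y,\bfx_{-i}) = \argmin_{y \in \Theta} c_i(y,\bfx_{-i})$. As $B_{OP}$ and $B'_{OP}$ have the same player set and the same strategy set $\Theta$ for every player, this says exactly that the best response of every player at every profile coincides in the two games, i.e.\ they are best-response equivalent.

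Next I would check that $\Phi'(\cdot) = 10^k\,\Phi(\cdot)$ is a generalized ordinal potential for $B'_{OP}$. Let $(y_i,\bfx_{-i})$ denote the profile obtained from $\bfx$ when player $i$ switches from $x_i$ to $y_i$, and assume $c'_i(\bfx) - c'_i(y_i,\bfx_{-i}) > 0$. Dividing through by $10^k > 0$ gives $c_i(\bfx) - c_i(y_i,\bfx_{-i}) > 0$, so by Theorem~\ref{thm:stable}, in which $\Phi$ is shown to be a generalized ordinal potential for $B_{OP}$, we obtain $\Phi(\bfx) - \Phi(y_i,\bfx_{-i}) > 0$; multiplying back by $10^k$ yields $\Phi'(\bfx) - \Phi'(y_i,\bfx_{-i}) > 0$. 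This is precisely the implication required of a generalized ordinal potential, which completes the argument.

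There is essentially no technical obstacle here; the only point that has to be respected is that the multiplier $10^k$ is strictly positive (which holds for any precision $k \ge 0$), since this is what makes both the $\argmin$-invariance and the sign preservation in the potential inequality valid — the statement would break for a non-positive multiplier. The purpose of the observation is purely instrumental: it will let one transfer the convergence bound of Proposition~\ref{prop:conv_int} for integer weights and influence to the bounded-precision regime of Theorem~\ref{conv_bound} by first clearing denominators, so the short verification above is all that is needed.
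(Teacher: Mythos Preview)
Your proposal is correct. The paper does not give a proof of this observation at all --- it is stated as self-evident --- so your two one-line verifications (that scaling costs by the positive constant $10^k$ preserves $\argmin$ and preserves the sign of both cost and potential differences) are exactly the natural justification and match the paper's implicit reasoning.
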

Now we are ready for proving Theorem~\ref{conv_bound} through an opportune extension of Proposition~\ref{prop:conv_int}.
\begin{proof}[Proof of Theorem~\ref{conv_bound}]
Let $B_{OP}$ the best-response opinion game whose edges' weights and social media's influence have bounded precision $k$, $B'_{OP}$ is exactly the same as $B_{OP}$, but the social media's influence and edges' weights are $10^k$ times greater than $B_{OP}$: $b' = 10^kb$, $w'_{e} = 10^kw_{e}, \forall e \in E$. Trivially, the cost of any player $i$ in $B'_{OP}$ is $10^k$ times the cost of that player in $B_{OP}$, $c'_i(\cdot) = 10^kc_i(\cdot)$. Therefore, from observation \ref{br_eq_ob} we know that $B_{OP}$ and $B'_{OP}$ are best-response equivalent. The influences of opinion game $B'_{OP}$ are integer, hence from Proposition~\ref{prop:conv_int}, it converges in $O(4b'\frac{n^2}{\delta^2}+4w'_{max}\frac{n^3}{\delta^2})$. Consequently $B'_{OP}$ converges in $O(10^k4b\frac{n^2}{\delta^2}+10^k4w_{max}\frac{n^3}{\delta^2})$.
\end{proof}


\bibliographystyle{unsrtnat}
\bibliography{arxiv}

\end{document}